\def \E{{\rm E}}
\def \min{{\rm min}}
\theoremstyle{plain}
\newtheorem{theorem}{Theorem}
\theoremstyle{definition}
\newtheorem{remark}[theorem]{Remark}
\DeclareMathOperator*{\argmax}{\arg\!\max}
\newcommand{\expnumber}[2]{{#1}\mathrm{e}{#2}}
\DeclareSymbolFont{symbolsC}{U}{pxsyc}{m}{n}
\DeclareMathSymbol{\coloneqq}{\mathrel}{symbolsC}{"42}
\title{A Comparison of Reinforcement Learning and Deep Trajectory Based Stochastic Control Agents for Stepwise Mean-Variance Hedging}
\author{
Ali Fathi \thanks{All contents and opinions expressed in this document are solely those of the authors and do
not represent the view of Wells Fargo Bank NA.} \\
Corporate Model Risk\\
Wells Fargo\\
\And
Bernhard Hientzsch \footnotemark[1] \\
Corporate Model Risk\\
Wells Fargo\\
}
\begin{document}

\maketitle

\begin{abstract}

We consider two data-driven approaches to hedging, Reinforcement Learning and Deep Trajectory-based Stochastic Optimal Control, under a stepwise mean-variance objective. 
We compare their performance for a European call option in the presence of transaction costs under discrete trading schedules. 
We do this for a setting where stock prices follow Black-Scholes-Merton dynamics 
and the "book-keeping" price for the option is given by the Black-Scholes-Merton model with the same parameters. 
This simulated data setting provides a "sanitized" lab environment with simple enough features 
where we can conduct a detailed study of strengths, features, issues, and limitations of these two approaches. 
However, the formulation is model free and could allow any other setting with available book-keeping prices. 
We consider this study as a first step to develop, test, and validate autonomous hedging agents, 
and we provide blueprints for such efforts that address various concerns and requirements. 
\end{abstract}

\section{Introduction}

Recent advances in data driven hedging have been heralded as one of the most exciting applications of machine learning (ML) and in particular reinforcement learning (RL) in derivatives pricing and risk management (\cite{deep_Bellman}, \cite{Kolm_Ritter}, also see \cite{Risknet}). 
One of the reasons for this excitement is that it is relatively easy to:
(1) simulate or generate data for the behavior of hedging instruments, 
(2) model the value of a trading strategy over time given such data 
-- even for realistically modeled incomplete markets where trades cause friction and attract transaction costs,
and (3) compute hedging or risk-management objectives of such strategies.
The hedging objectives could be based on how the strategies perform as measured against 
book-keeping prices for to-be-hedged portfolios or against to-be-made (or to-be-received) payoffs for these to-be-hedged portfolios. 
This means that it is straightforward to model the behavior of the trading strategies even as consistent prices or sensitivities 
might be difficult to compute in conventional models. 
Therefore one can train data-driven agents that optimize the hedging and risk-management objectives 
and perform better than conventional sensitivity-based and 
model-based hedging approaches that are currently often used for such hedging.  
Data-driven agents can also take into account  other factors and automate hedging even in cases where 
conventional hedging approaches require ad-hoc decisions and tweaks by traders and risk managers 
to address the shortcomings of the existing approaches. 

Several banks and other financial institutions have started to apply these new approaches to hedging and risk management (see \cite{RisknetJPM} for instance). 
However, in our experience, one of the main concerns the quants or the traders express regarding the data-driven hedging agents is the 
\textit{model risk} aspects of deploying such solutions \cite{SR11-7}, and there seems to be a need for a framework to manage the model risk which may arise from using them. 
To contribute to wider adoption of data-driven hedging solutions across the financial industry, we are studying here how we can apply 
RL and Deep Trajectory-based Stochastic Optimal Control (DTSOC) approaches to hedging problems 
(in particular the stepwise mean-variance hedging setting), what are commonalities and differences,
what are advantages, features, issues, and limitations. We also opine on how such issues and limitation may be addressed. 

\subsection{Scope of This Work}
%We are interested in learning agents for trading strategies (a.k.a. policies or (feedback) controls in different contexts) in capital markets settings. Here, 

This paper concentrates on 
hedging strategies that optimize the mean and variance of hedged portfolios (containing hedged instruments and hedging instruments) by optimizing a cumulative stepwise 
mean-variance objective, but the setting will apply to many trading strategies with different objectives (and we will discuss other settings in future papers). 

To define the set up, we will need to model how the prices of the hedging and the hedged instruments evolve, 
how the trading strategy to do so performs (including transaction costs), 
and define the stepwise objectives to be optimized over in terms of the prices, costs, etc. involved in the strategy. 

In mean-variance hedging, one assumes that both the prices of the hedging and the hedged instruments to define the hedging objectives are known. This might be the case 
if one tries to learn hedging strategies from settings in which both hedged instruments (say, vanilla options) and hedging instruments (underlier and cash) are
liquidly quoted or where one has a book-keeping model for the hedged instruments that will provide the price. The first setting might make sense if one has over the counter
(OTC) or exchange-traded options on the books and needs to hedge them with the underlying even though at least some such options are liquidly traded. The second 
setting might make sense for a trader in an investment bank that hedges an option on the books which will be valued by that book-keeping model\footnote{
We call a model a book-keeping model if it reflects the price or value of an instrument in some way but does not necessarily take all details and features 
of the instrument and/or the market into account. For instance, Black-Scholes type models and similar "simpler" models are often used to mark instruments 
in systems of record (SOR) and they are validated and approved for such use because even though they do not reflect all features and details, they reflect 
value well enough for purpose. In particular, if the model is contained in a system of record, we call it a SOR model. Sometimes, for purposes of margins or 
collateral or trading through central counterparties, using such simpler book-keeping models has the advantage that there will be less uncertainty 
and fewer arguments about the model parameters and settings, compared to more complicated models.} in the system of record (SOR) 
and the trader will be judged by their performance against the prices contained in the SOR. 

There might be settings in 
which neither the hedged instrument nor any closely related instrument are liquidly quoted or in which there are no appropriate and flexible enough book-keeping 
models available. In these circumstances, one would need stepwise or global objectives that are defined without reference to (book-keeping or other) prices of
the to-be-hedged instruments. These could be for instance objectives that consider how well the trading strategy replicates or risk-manages all cashflows of 
the to-be-hedged instruments, for instance in a squared difference sense, leading to \textit{quadratic hedging} (we leave this case to a future paper).

We consider a case in which we have a model for how appropriate risk factors and prices of the hedging and the hedged instruments 
evolve that are needed to compute the objectives. What we mean by that is that we have some implementation that is able to generate as many trajectories 
of these instrument prices and other risk factors as needed. Based on these trajectories, we will be able to simulate the trading strategy and their performance 
and compute the appropriate objectives. We will here assume that the trading from our trading strategies does not impact how the prices of the hedging and the hedged instruments
evolve, just how the trading strategy and the objectives evolve, which is a common assumptions in current works. We do not specify how this model is obtained, 
setup, or run; only that we can obtain those trajectories as needed. As such, the model could be a well-specified traditional model or it could be 
a trained generative model (\cite{boursin2022deep}, \cite{Cohen_neural_SDE}). Here, we will work with simulated data from Black-Scholes model and defer the settings involving synthetic data from generative models to future work. 

We will test and validate our agents on simulated data from such models and leave the question whether such models trained on simulated or generated data 
can perform well on observed data for our settings to future work. There are indications in the literature that models trained on 
appropriate simulated or synthetic data can perform well when applied to observed data from the market (as in \cite{boursin2022deep}). 

We are not using the model(s) or any information about the model(s) in the learning and training beyond using it to generate trajectories and data including
stepwise objectives. In this way, we are operating in a model-free, data-driven setting.\footnote{Particular setups might include state variables and/or compute features
that are based on a particular model or setting, such as an option Delta computed under the assumption that the dynamics of the hedging instrument can be approximated
by a Black-Scholes type model, but the algorithms to learn trading strategies or value functions do not use or require any model details beyond what is contained 
in the state and feature space.} 
Given the above discussion, we are actually operating in a rich-data 
(even infinite-data) regime, corresponding to a well-defined conventional or generative model. 

The rest of the paper is structured as follows: First, we will discuss the instrument and trading strategy setup and mean-variance hedging. 
Then, we will discuss how Reinforcement Learning (RL) conceptualizes the problem and the setting,
introduce the RL techniques and algorithms that we will apply,
and then introduce the deep trajectory-based empirical stochastic optimal control (DTSOC) 
approaches and algorithm. 
We will then specify the experimental setup and the models, report and analyze outcomes,
visualize and interpret outcomes, and analyze sensitivity of the results with respect to 
various choices. 
Finally, we conclude and discuss some possible future directions. 

\subsection{Related Work}
We discuss here some of the existing work in the literature on the use of RL and neural networks for pricing and hedging options (see \cite{RLfinance} for a broader survey of RL applications in finance).

%In \cite{QLBS}, the problem of pricing/hedging a European call option under the Black-Scholes model (discrete trades without transaction cost) is considered and approached by the $Q$-learning (QLBS) method. Therein, the reward formulation is based on minimizing the \textit{initial hedge capital} needed to initiate the hedge portfolio as well as to minimize the volatility throughout the trading periods. With this formulation and using dynamic programming, the optimal hedging strategy is derived analytically.
% update as in the FQH paper, to expand on the v1 description
In \cite{QLBS}, a European call option in the Black-Scholes model (trading at discrete times, no transaction costs) 
is hedged and priced using a $Q$-learning method (there, called QLBS method). 
%One starts at maturity with the given
%payoff and then time-steps the self-financing hedging portfolio backwards and at each time-step looks for the
%the number of shares held to minimize the variance of the hedging portfolio at the previous step. 
The objective is minimizing the sum of the initial cash position and the weighted discounted sum of variances of the 
hedging portfolio at all subsequent time steps. This setting and objective 
can be identified with a corresponding MDP setup 
and thus solved with standard RL/DP approaches such as $Q$-learning \cite{QLBS}. 
The price at maturity has to be equal to the final payoff and prices respective $Q$-functions at earlier times
are determined by Dynamic Programming or other standard RL approaches. 

 The  mean-variance hedging problem  in a setting which includes transaction costs is considered in \cite{cao2021deep}. 
However, they define the reward objective in global fashion - it is the expected total cost for the entire hedging and its variance.
They use versions of $Q$-learning and other methods to address this problem. 
They do not treat stepwise mean-variance hedging. 

 The stepwise mean-variance hedging under transaction costs for a single option is considered in \cite{Kolm_Ritter}. 
 The hedging agent is trained with an RL algorithm which seems to be a version of $Q$-learning. 
 Also, stepwise mean-variance hedging under transaction costs with various RL algorithms is considered in \cite{du2020deep}. 

In \cite{mikkila2021empirical}, authors consider a case where the stock price\footnote{Actually, the authors are using the value of the underlying index, 
and are assuming that the underlying index is tradeable, which is not completely accurate (futures on the index are traded but the spot index, 
in particular through changes to the index, is not tradeable and not entirely replicable through self-financing strategies).} 
and the option price are taken from higher frequency data acquired from an option exchange and used to train an agent with an RL algorithm. 
They use a stepwise hedging objective; however, it is defined in terms of mean and standard deviation 
(and in terms of P\&L and absolute value of P\&L for the sample version). If the option is quoted and traded on an exchange, 
it might be less necessary to hedge or replicate said option. 

There exists a large literature on applications of neural networks for option pricing. 
Most of these works consider the neural networks as a numerical tool to accelerate calculation tasks (such as model calibration \cite{hernandez2016model}, 
solving PDEs \cite{Deep_PDE}, etc.) more efficiently and in higher dimensions (also see \cite{ruf2019neural} for a literature review). 
Among these, we distinguish the line of works initiated in \cite{han2016deepsc} 
where neural networks are used to solve the stochastic optimal control problem numerically.
More concretely, the DTSOC method in \cite{han2016deepsc} is a setup for trajectory based empirical deep stochastic optimal control with both stepwise and final objectives/costs. 
Finally, the "deep hedging" paper \cite{Deep_Hedging} presents a trajectory based empirical deep stochastic optimal control
 approach to minimizing some global objectives related to replication and/or risk management of some final payoff, 
 however, it does not consider a stepwise objective, just a global/final objective.

\section{Framework}

\subsection{General Setup}

We consider here dynamic trading, hedging, and/or risk management problems. These  problems are typically posed on some universe of given instruments (including cash/bank account) and the trader or agent has to select a strategy that determines  how much of each instrument to hold at each time to
achieve a certain trading, hedging, or risk management objective. 
As is common in such settings, we assume that there is a given fixed grid of times
 at which trades can occur and that such trades can incur transaction costs. We also assume that the strategies are 
self-financing. %(or potentially with pre-defined withdrawal or injection of cash at predetermined times and some predetermined initial cash or set of holdings). 
The objectives can be defined at a certain time horizon or they could be defined for each time period between trading times (stepwise objectives).
Here, we will consider a stepwise objective. 
These stepwise objectives do not necessarily have to directly correspond to trading gain or loss or transaction cost or any financial item, they can also be defined in terms of moments or other properties of
the distribution of specific gains, losses, costs, or appropriate combination thereof.

On a basic level, this is a stochastic sequential decision problem. 
The prices of the given instruments will follow some stochastic process and it is assumed that we are given a model or generator for this process. 
Based on those prices and any other appropriate risk factors  that can be computed based on the history of said prices, 
we need to decide at each trading time-step
what the new holdings of the given instruments will be and thus what the necessary trade sizes are 
(so holdings and/or trade sizes are the appropriate decision variables), taking into account the transaction costs and other 
impacts associated with that trade. 
Given those holdings and cash-at-hand at the beginning of the no-trade interval, the value of all those holdings and cash at the end of the no-trade interval follows 
from the stochastic evolution of the prices of the given instruments and interest on the cash account. From this, one can define the stepwise objective, which often is based on an appropriately defined 
change in strategy value over that period. 
We assume here that we have an implementation or model that describes the evolution of the trading strategy and also the computation of the stepwise objectives.

\subsection{Trading Strategy Corresponding to Hedged Portfolio} 
We consider the case of a hedged portfolio. This means that we are given a fixed holding $H^O$ of the to-be-hedged portfolio.\footnote{Which could be easily extended to 
a priori given holding level $H^O_t$ over time.}
It is assumed that the price of the to-be-hedged portfolio is given as $O_t$, corresponding to a quoted/observed or a computable book-keeping price. We do not rely on how this price is computed, 
only that we are provided with some implementation of it. It is also assumed that we are given other "hedging" instruments also and we denote their prices by $S_t$ (in general a vector) and we 
denote the holding of them in our strategy at time $t$ with $H^S_t$ (a vector of corresponding length).

 For our examples, we will typically assume a single hedging instrument. 
Finally, we assume that excess cash from the trading strategy is deposited in a money-market account with a certain given interest accrual and discount factor (and that negative cash 
balances can be borrowed at the same interest accrual and discount factor). We assume that we initially start with only the fixed holding $H^O$ and some initial cash $Y_0$ but no holdings
in the hedging instruments (i.e., $H^S_{0^{-}}=0$). The trading times are given as $t_i$ with $t_0=0$. If referred to, $t_{-1}$ is a time just before time 0 with no holding in 
hedging instruments. 

Consider time $t_{i+1}$, right before the trade: The value of the hedging part of the strategy then is  
\begin{equation}
Y_{t_{i+1}} = \frac{\left( Y_{t_{i}} - H^S_{t_i} S_{t_i} - \mathsf{TC}(H^S_{t_i},H^S_{t_{i-1}},S_{t_i}) \right)} {\mathsf{DF}_{i,i+1}} + H^S_{t_i} S_{t_{i+1}} 
\end{equation} 

This reflects that at the last trade time $t_i$, the portfolio was rebalanced to the appropriate holdings $H^S_{t_i}$, transaction costs were charged 
$\mathsf{TC}(H^S_{t_i},H^S_{t_{i-1}},S_{t_i})$, the thus resulting cash balance $Y_{t_{i}} - H^S_{t_i} S_{t_i} - \mathsf{TC}(H^S_{t_i},H^S_{t_{i-1}},S_{t_i})$ attract stepwise
interest (interest accrual corresponds to dividing by appropriate stepwise discount factor $DF_{i,i+1}$). At the same time, the holdings in the hedging instruments are now valued based 
on the new price of the hedging instruments. 

It helps to rewrite the equation as follows:
\begin{equation}
Y_{t_{i+1}} = \frac {1} {\mathsf{DF}_{i,i+1}} Y_{t_i} + H^S_{t_i} \left( S_{t_{i+1}} - \frac {S_{t_i}} {\mathsf{DF}_{i,i+1}} \right)
- \frac{\mathsf{TC}(H^S_{t_i},H^S_{t_{i-1}},S_{t_i})} {\mathsf{DF}_{i,i+1}} %+ H^S_{t_i} S_{t_{i+1}}  $ typo-fix 
\end{equation} 

In this completely linear setting without stochastic interest rates (and assuming that the transaction cost is linear in the instrument prices for positive multipliers $c$,
$ \mathsf{TC}(\cdot,\cdot,cS) = c \mathsf{TC}(\cdot,\cdot,S)$), 
it is possible to "hide" the discounting in the value definitions in the following sense: with $\tilde{Y}_{t_i} = \mathsf{DF}_{0,i} Y_{t_i}$ and $\tilde{S}_{t_i} = \mathsf{DF}_{0,i} S_{t_i}$,
we have
\begin{equation}
\tilde{Y}_{t_{i+1}} = \tilde{Y}_{t_i} + H^S_{t_i} \left( \tilde{S}_{t_{i+1}} - \tilde{S}_{t_i} \right)
- \mathsf{TC}(H^S_{t_i},H^S_{t_{i-1}},\tilde{S}_{t_i})  % typo fix
\end{equation} 

In particular, this means for the increment
\begin{equation}
\tilde{dY}_{t_{i+1}} = H^S_{t_i} \left( \tilde{S}_{t_{i+1}} - \tilde{S}_{t_i} \right)
- \mathsf{TC}(H^S_{t_i},H^S_{t_{i-1}},\tilde{S}_{t_i}) % typo-fix 
\end{equation} 
and we thus do not get any contribution or impact of the initial cash position $Y_0$ beyond an additive shift.

If we are looking at the change of value in the total portfolio (both hedged and hedging instruments), we obtain
\begin{equation}
\delta \tilde{V}_i = \tilde{\mathsf{PnL}}_i =H^O \left( \tilde{O}_{t_{i+1}} - \tilde{O}_{t_i} \right) + H^S_{t_i} \left( \tilde{S}_{t_{i+1}} - \tilde{S}_{t_i} \right)
- \mathsf{TC}(H^S_{t_i},H^S_{t_{i-1}},\tilde{S}_{t_i})  % typo fix 
\end{equation} 
where we denoted $V_t= H^O O_t + Y_t$ and  $\tilde{V}_t= H^O \tilde{O}_t + \tilde{Y}_t$  . 
In the following, we will work with values discounted back to initial time and will omit the tilde over the variables.

Extensions to stochastic interest rates and discounting, differential rates, more involved funding policies are possible, 
require further notation and details, and can be handled similarly, but are not necessary for the settings we discuss here. 
Differential rates and more involved funding policies could be included in the framework by adding bank accounts for positive 
and negative balances and other funding instruments to the hedging instruments and adding constraints to the decision variables (only positive holdings of positive bank account, only 
negative holdings of negative bank account i.e. bank loan, restricting secured loans/repos to the amount held in the corresponding collateral, etc.).

\subsection{Stepwise Mean-Variance Objective}

Often, one would like to control both the mean and variance of the stepwise gains or losses by maximizing mean of gains penalized by a multiple of variance 
(or equivalent minimizing mean of loss combined with variance). When sampling, one can approximate the variance by the second moment (see \cite{ritter2017machine} for a discussion) which means that 
one can express that objective by the expectation of costs
\begin{equation}
c_i = - \mathsf{PnL}_i+ \frac{\lambda}{2} (\mathsf{PnL}_i)^2.
\end{equation}
This can now be used as a cost or reward in RL or DTSOC algorithms. 
Costs to unwind the position at maturity (or convert it into the position to match the payoff of the hedged instrument) can 
be included as final cost if needed. 

\subsection{Relation to Global Mean-Variance Objective}

In some settings, one tries to control only the mean and variance globally, i.e. a mean and variance combination of 
only the final wealth 
is optimized over:
\begin{equation*}
\min_{\text{strategies}}\E[-V_T]+\frac{\lambda}{2}\mathbb{V}[V_T]
\end{equation*}
where $V_t$ is the wealth/value process for the overall hedged portfolio as derived earlier in this section (see, for instance \cite{Kolm_Ritter}).

Using the following discrete time decompositions for the mean and variance of the terminal wealth under appropriate assumptions for the variance, 
for instance a random walk assumption as in \cite{ritter2017machine},
\begin{equation*}
\E[V_T] = V_0+\sum_{1}^T\E[\delta V_{t}].
\end{equation*}
and
\begin{equation*}
\mathbb{V}[V_T] = \sum_{t=1}^T\mathbb{V}[\delta V_{t}],
\end{equation*}
\cite{Kolm_Ritter} rewrite the mean variance hedging (MVH) objective in the following way,
\begin{equation}\label{MVH_Kolm_Ritter}
\min_{\text{strategies}}\sum_{t=0}^{T}\left(\E[-\delta V_t]+\frac{\lambda}{2}\mathbb{V}[\delta V_t]\right)
\end{equation}
and thus obtain a stepwise objective as the one discussed in the last subsection. 

\subsection{Modeling the Hedging and Hedged Instruments} 

The formulations of costs/rewards in the above sections only depend on discounted prices of the hedging instruments, hedged instrument, and transaction costs; 
but do not need any further information about the models for these instruments and prices. With the costs/rewards given above, the agent only 
needs to determine how many units to hold in each of the hedging instruments after rebalancing at each rebalancing time $t_i$, i.e. $H^S_{t_i}$.
Alternatively, one could specify the units to be sold or bought at any given time together with the initial position; or assume that the 
amount sold or bought is proportional to the time between trades or some other variable and then specify that proportional trading rate
in that setting. 

For the hedged instrument, one needs to know the position in the instrument $H^O$ (i.e., whether long or short or some particular holdings
at particular times) and additional information about the instrument - at least time to maturity and/or time. 
The minimal state for modeling this decision problem are these discounted prices (hedging and hedged instruments), 
the amount of the hedging instruments, time, and whatever state the models for these prices and the transaction costs need. 

As for the models for the prices of the hedging and hedged instrument, one can model them under pricing or observational measure,
with some differences in parameters and calibration or fitting. For the hedging instruments, one can use conventional quantitative 
finance models, possibly with hidden factors, such as Black-Scholes, Local Volatility Model, Heston model (with stochastic variance), 
quadratic rough Heston model, etc. One could use generative models such as GANs or appropriately trained neural Stochastic Differential Equations (SDE)s or similar. Finally, 
one could use historically observed data over one instrument or a cross section of similar instruments together with appropriate assumptions 
to generate possible future price movements or future prices.\footnote{See also \cite{cohen2021blackbox} for a discussion of possible 
approaches together with the associated "model" risks.}    

For the hedged instrument, one of the challenges is that the prices of the hedged and the prices of the underlying instruments  need to be consistent in 
an appropriate sense because otherwise they might allow arbitrage strategies. One could implement some book-keeping model which would give the price of that hedged instrument in terms of 
state variables and other parameters. In a trading setup in an investment bank or a hedge fund, holdings will be marked by some 
model within a system of record, and the book-keeping model would be that model in the system of record. For faster computation, 
one could try to learn a surrogate that approximately replicates that SOR model but can be run much faster. Alternatively, if the reference prices for the hedged instrument are observable, one may directly model such prices either as dependent on 
underlying prices or jointly with underlying prices, with appropriate conditions. This, however,
will not work for hedged instruments that are illiquidly or rarely traded or for which no consistent pricing is known. 

One could try to determine a consistent model according to some conceptualization, such as replication with controlled risk or market making 
according to certain strategies and models. However, the construction and validation of such consistent models is very hard, and under such 
circumstances, one most probably should design trading and risk management strategies that describe and manage the risk without requiring
book-keeping and/or consistent modeling, such as hedging that tries to control the replication and risk management of the payoffs of the 
to-be-hedged instrument rather than targets a certain book-keeping value that does not adequately reflect the nature and risk of the instrument. 

Particularly simple examples concern the hedging of a short call option $H_O=-1$ that we sold to some counterparty or the 
hedging of a long call option $H_O=1$ that we bought, use the Black-Scholes model with constant (or time-dependent parameters), with log-Euler time discretization (which 
is exact assuming the parameters are properly chosen), with the Black-Scholes Formula for a call as book-keeping model,
and size of holding as action and decision variable. 

In formulas, this would mean for the price of the hedging instrument
\begin{equation}
S_{t_{i+1}} = S_{t_i} \exp \left( (\mu - \frac{1}{2} \sigma^2) \Delta t_i + \sigma \Delta W_i \right)
\end{equation}
or
\begin{equation}
\tilde{S}_{t_{i+1}} = \tilde{S}_{t_i} \exp \left( (\mu - r - \frac{1}{2} \sigma^2) \Delta t_i + \sigma \Delta W_i \right),
\end{equation}
such that $\tilde{S}_{t_{i+1}}= f_1(\tilde{S}_{t_i},\Delta W_i)$ for an appropriate function $f_1$ and a standard normal $\Delta W_i$
with covariance $\Delta t_i$.
For the price of the call option, we use the Black-Scholes formula as book-keeping model. 
\begin{equation}
\tilde{C}_{t_{i+1}} = \exp(-r t_i) \textsf{BSCall}(S_{t_i},t_i,\mu,\sigma,r)
\end{equation}
The minimal state would consist of $s_i=(t_i,\tilde{S}_{t_i},\tilde{C}_{t_i},H^S_{t_i})$ with the last element being the action (or
being impacted by the action). One can also add other features and information to the state space that the agent could potentially use
to make better decisions or that allow the agent to be learned more easily. 

If one considers more complicated models based on conventional quantitative finance models, one would simulate the (discounted) underlying instruments 
with that model and one would replace the Black-Scholes formula with an appropriate pricing formula or pricer under that model (an 
appropriate future value computation under the book-keeping model). If these models have additional factors, these factors would be
added to the state. If some of these factors are latent or hidden factors, one would need to add some mechanism how these latent 
factors can be estimated or taken into account by some process on observed quantities, add these observed quantities to the 
state, and learn agents that only depend on observed and observable quantities, not the latent factors that will in general 
be unknown (and unknowable) to the agent.  

In general, trading in hedging instruments could impact the prices in those hedging instruments either temporarily or permanently. 
We assume here that the hedging instruments are traded liquidly and that the hedged instrument is such that it can be hedged 
without impacting the prices of the hedging instruments. To a certain extent, short-term price impact can be modeled by and 
absorbed into the transaction cost terms.  

As discussed in abstract and introduction, here we focus on Black-Scholes type models to investigate the agents 
and algorithms in a setting where the model and the features are simple enough, and will consider other models in future work.

\section{Reinforcement Learning}

Reinforcement Learning (RL) is a framework for solving problems consisting of a learning agent interacting with an environment according to a decision policy. The interaction of the agent with the environment is accompanied by receiving rewards (perhaps with delay) and the agent must learn to act according to an optimal policy which maximizes the expected sum of reward received over a given time horizon. The environment consists of everything external to the agent, and is the source of the agent’s observations as well as the reward.

RL framework can be effectively applied to an important class of sequential decision-making problems which can be recursively decomposed into sub-problems, where the result of taking a particular action does
not depend on the prior history of the system up to that point. We first start by formalizing the setup for those problems.

\subsection{Markov Decision Processes}
We begin by defining a Markov decision process (MDP). A discounted finite horizon MDP is defined by a tuple $(\mathcal{S},\mathcal{A}, P, R, \gamma, T)$, where $\mathcal{S}$ is the set states, $\mathcal{A}$ is the set of all actions, $P: \mathcal{S}\times \mathcal{A}\rightarrow \mathcal{P}(\mathcal{S})$ is the transition probability density , $R:\mathcal{S}\times\mathcal{A}\times\mathcal{S} \rightarrow \mathcal{R}$ is the immediate reward (could also be stochastic), $\gamma\in(0, 1]$ is the reward discount factor and $T$ is the time horizon ($T<\infty$). By taking any action $a\in\mathcal{A}$ at the state $s\in\mathcal{S}$, $P(.| s, a)$ defines the probability distribution of the next state and $R(.| s, a,s')$, the distribution of the immediate reward.

A policy $\pi:\mathcal{S}\rightarrow\mathcal{P}(\mathcal{A})$ maps a state $s\in\mathcal{S}$ to a probability distribution $\pi(.|s,a)$
over the set of actions $\mathcal{A}$. The interaction of an agent with the environment is formalized as follows, following a policy $\pi$, starting from the state $s_0$, at each time step $t$, the agent observes the state $s_t$ and takes an action $a_t$ according to the policy $a_t \sim \pi(s_t)$. When the agent performs action $a_t$, the environment makes a stochastic state transition to a new state $s_{t+1}$ according to the probability distribution $P(s_{t+1}|s_t, a_t)$. The agent receives a reward $r(s_t, u_t)$ drawn from the distribution $R(r|s,a,s')$. This reward is a random variable, because it depends on all of the stochastic
transitions up to time $t$. % (Figure \ref{RL_env}).
The agent is allowed to make decisions, however, the transition density $ P$ and the reward distribution $R$ are dictated by the environment. 
While the agent will observe transitions and rewards, the underlying model and details for $P$ and $R$ are not known to the agent in general.

Therefore, the evolution of the MDP following the policy $\pi$ (also called the trajectory) is given by,
\begin{equation*}
\left\{(a_t,r_t, s_{t+1}), a_t\sim\pi(.|s_t), s_{t+1}\sim P(.|s_t,a_t), r_t\sim R(.|s_t,a_t,s_{t+1})\right\}, t=0,1,\cdots,T.
\end{equation*}

The value of a state for a given policy $\pi$ is the cumulative reward across the trajectories starting from $s_0=s$ and following $\pi$ along the way. The \textit{value function} corresponding to the policy $\pi$ is defined as,
\begin{equation}
V^{\pi}(s) =\E\left[\sum_{t=0}^T\gamma^t r_t\middle|s_0=s\right].
\end{equation}

The action-value function or the \textit{Q-function} at state $s$ and action $a$ is the value of taking the action $a$ and following the policy $\pi$ onwards,

\begin{equation}
Q^{\pi}(s,a) = \E\left[\sum_{t=0}^T\gamma^t r_t\middle|s_0=s, a_0=a\right].
\end{equation}
The value function can be expressed as the expectation of the Q-function across all possible actions,

\begin{equation}
V^{\pi}(s) = \E\left[Q^{\pi}(s,a)\middle|a_t\sim\pi(.|s)\right].
\end{equation}

The Q-function satisfies the Bellman expectation equation,
\begin{equation}\label{Q_Bellman_equation}
Q^{\pi}(s,a) = \E\left[ r_{t} + \gamma Q^{\pi}(s', a') \;\middle|\; s_t = s, a_t = a\right].
\end{equation}
The optimal policy $\pi^\ast$ gives the maximum action-value function $Q^{\ast}_t(s,a)$ for any $s$ and $a$,
\begin{equation}
Q^{\ast}(s,a) \ge Q^{\pi}(s,a).
\end{equation}
The optimal Q-function satisfies the \textit{Bellman optimality} equation (see \cite{sutton2018reinforcement}):
\begin{equation}
Q^{\ast}(s,a) = \E\left[ r_{t} + \gamma \max_{a'} Q^{\ast}(s', a') \;\middle|\; s_t = s, a_t = a\right]. \label{bellman_optimality_equation}
\end{equation}

\subsection{Solving MDPs with Q-Learning}

The tabular Q-learning algorithm finds the optimal Q-function (and hence the optimal policy $\pi^\ast$) satisfying (\ref{bellman_optimality_equation}). To formalize the notion of optimal policy, for each Q-function, we define the \textit{greedy policy} $\pi^Q$ as the policy that chooses the action maximizing the Q-function,
\begin{equation}\label{greedy-policy}
\pi^Q(a|s)=0~~\text{if} ~~Q(s,a)\neq\max_{a'\in\mathcal{A}}Q(s,a'). 
\end{equation}
The optimal policy $\pi^*$ is defined as greedy policy for the optimal Q-function $Q^*$\footnote{If there is more than one maximizer, the greedy policy can be defined as a uniform distribution over such maximizers.}.

In order to find the optimal $Q^*$-function, define the Bellman optimality operator $T$ as follows,

\begin{equation}
TQ(s,a) = r(s,a)+ \gamma\E\left[\max_{a'} Q(s', a') \;\middle|s'\sim P(.|s,a) \right]. \label{bellman_iteration}
\end{equation}

It is observed that equation \eqref{bellman_optimality_equation} describes a fixed point $TQ^*=Q^*$ of the Bellman operator in a suitable function space. 
In fact, it can be shown that the Bellman operator is a $\gamma$-contraction and hence admits a fixed point\footnote{The proof in the tabular case requires $\gamma\in[0,1)$.}. 
The \textit{Q-value iteration} algorithm approximates this fixed point by initializing a random $Q_0$ and 
constructing a sequence of action-value functions $\{Q_k\}$ by defining $Q_k = TQ_{k-1}$. 
It is shown that $\{Q_k\}$ converges to the optimal value function at a linear rate 
(see \cite{RL_book2}, \cite{sutton2018reinforcement}, also \cite{DQL_mathematical}).

\begin{algorithm}
\caption{Q-value Iteration}\label{Q_value_iteration}
\begin{algorithmic}
\State Initialize an arbitrary $Q_0\in\mathbb{R}^{(|\mathcal{S}|\times|\mathcal{A}|)}$
\State At every iteration $k$, improve the $Q$-value as,
\begin{equation}\label{Q-update}
Q^k(s,a)=r(s,a)+\gamma\E_{s'}\left[\max_{a'}Q^{k-1}(s',a')|s,a\right]
\end{equation}
\State Stop whenever, $||Q^k-Q^{k-1}||_{\infty}$ is small.
\end{algorithmic}
\end{algorithm}

The \textit{Q-learning (tabular)} algorithm approximates the updates in equation (\ref{Q-update}) by replacing the expectation with sample observations. At any step $t=1,2,\cdots,T-1$, after taking some action $a_t$, the algorithm observes the immediate reward $r_t$ and the state transition $s_{t+1}\sim P(.,s_t,a_t)$ and updates the $Q$-values for the state-action pair $(s_t,a_t)$ by,
\begin{equation*}
Q_{k+1} = (1-\alpha)Q_k(s_t,a_t)+\alpha\underbrace{\left(r_t+\gamma\max_{a'}Q_k(s_{t+1},a')\right)}_{target}.
\end{equation*}
The pseudo-code for the tabular $Q$-learning is given below,

\begin{algorithm}
\caption{Tabular Q-Learning}\label{Tabular_Q}
\begin{algorithmic}
\State Initialize a random state distribution $D_0$.
\State Initialize an array $\hat{Q}$ for estimates of the $Q$-value.
\While {change in $\hat{Q}$ over consecutive episodes is not small}
\State Reset $t=1$, sample $s_1\sim D_0$.
\For{$t=1,\cdots$ end of episode}
\State Take an action $a_t$, observe reward $r_t$, new state $s_{t+1}$.
\State $\text{target}\coloneqq \left(r_t+\gamma\max_{a'}\hat{Q}(s_{t+1},a')\right)-\hat{Q}(s_t,a_t).$
\State Update $\hat{Q}(s_t,a_t)\leftarrow\hat{Q}(s_t,a_t)+\alpha*\text{target}$
\EndFor
\EndWhile
\State \textbf{end}
\end{algorithmic}
\end{algorithm} 

\begin{remark}\textbf{Convergence.}
For the infinite horizon with finite state and section case, 
the convergence of the $Q$-learning algorithm can be proven 
if all actions and states are sampled infinitely many times 
(it does not matter how we select the actions), 
learning rates for updating the $\hat{Q}$ are small and 
do not decrease too quickly (we considered a constant learning rate in the pseudo-code above). 
The proof relies on a tool from online optimization called \textit{ the stochastic approximation method} (see \cite{RL_book2}, \cite{sutton2018reinforcement}).
\end{remark}

\begin{remark}
\textbf{Action exploration.}The $Q$-learning convergence theorem states that it does not matter how the actions are selected as long as they are sampled infinitely many times. One can choose a greedy policy per equation (\ref{greedy-policy}). However, this has the drawback of compounding errors, namely, greedy actions based on sub-optimal $Q$-estimates may result in getting stuck in local optima and failure to estimate $Q$-values with higher values.

One solution to this is to sample an \textit{$\epsilon$-greedy policy} at each step. 
In the pseudo-code above, before updating the $\hat{Q}$ estimate, one varies the policy by randomization,
\begin{equation}
\pi^{k+1}(s) = 
\left\{
\begin{array}{lr}
a^*_k\sim \argmax_{a}\hat{Q}^{\pi^k}(s,a) & \text{probability}~ 1-\epsilon\\
\text{randomly sampled $a$} & \text{probability}~ \epsilon
\end{array}
\right. ,
\end{equation}
\end{remark}
and the \textit{target} in the $\hat{Q}$-update takes the form of
\begin{equation*}
\text{target}\coloneqq \left(r_t+\gamma\E_{a\sim\pi^{k+1}(s_{t+1})}\hat{Q}(s_{t+1},a)\right)-\hat{Q}(s_t,a_t).
\end{equation*}
\subsubsection{Q-Learning with Function Approximation}

The $Q$-learning method introduced above constructs a lookup table of the size $|\mathcal{S}|\times|\mathcal{A}|$. This table is prohibitively large for almost any reasonably interesting MDP in practice (curse of dimensionality). An approach to tackle the curse of dimensionality in tabular Q-learning is to approximate the optimal Q-function with function approximators \cite{RL_book2}. For example, the function approximation $Q_\theta(s,a)$ could be parametrized as a linear regression, a regression/decision tree or a deep neural network (hence the name deep Q-network -- DQN \cite{DQN_paper}). With a given function representation $Q_\theta(s,a)$, the estimates for the $Q$-values for any state-action pair can be computed. As a result, in $Q$-learning with function approximators, instead of populating the $|\mathcal{S}|\times|\mathcal{A}|$ table, we approximate the parameters of the function approximation to the $Q$-function.

Recall that by Bellman optimality equation for the optimal Q-function \eqref{Q_Bellman_equation}, we want to have the following for any state-action pair,
\begin{equation}
Q_\theta^*(s,a) \approx E_{s'\sim P(.|s,a)}\left[r(s,a,s')+\gamma\max_{a'}Q_\theta^*(s',a')\right],
\end{equation}
To a given transition $(s,a,s')$, we assign the error,
\begin{equation}
l_\theta(s,a,s') = \left(Q_\theta (s,a)-\underbrace{[r(s,a,s')+\gamma\max_{a'}Q_\theta(s',a')]}_{target}\right)^2.
\end{equation}
The goal is to minimize the \textit{Bellman Mean Squared Error},
\begin{equation}
\ E_{s'\sim P(.|s,a)}\left[l_\theta(s,a,s')\right].
\end{equation}

In Q-learning with function approximation, upon taking an action and receiving rewards and new state, instead of updating the $Q(s,a)$ table by taking the running average, one updates the parameters in $Q_\theta(s,a)$. In training DQN, updating the weights can be done by stochastic gradient descent (SGD). A description of the algorithm is given below in (\ref{Q_learning}).

\begin{algorithm}[H]
\caption{Q-Learning with Function Approximation and SGD}\label{Q_learning}
\begin{algorithmic}
\State Initialize $Q_\theta(s,a)$ (differentiable with respect to $\theta$)
\State Get initial state $s_0$
\For{$k=0,\cdots$ until convergence}
\State Sample action $a$ and observe reward $r$ and next state $s'\sim P(.|s,a)$
\State $\text{target}\coloneqq \left(r_t+\gamma\max_{a'}Q_{\hat{\theta}}(s_{t+1},a)\right)$
\State $l_{\theta_k}(s, a, s'):=\left(Q_{\hat{\theta}} (s,a)-\text{target}\right)^2$
\State $\theta_{k+1}\leftarrow\theta_k-\alpha\nabla_{\theta_k}l_{\theta_k}(s, a, s')$
%\State $s\leftarrow s'$, if $s'$ realized at some point is a terminal state, $s$ is reset to starting state $s_0$
\If{$s'$ is terminal state}
\State $s\leftarrow s_0$
\Else{}
\State $s\leftarrow s'$
\EndIf
\EndFor
\end{algorithmic}
\end{algorithm} 

%\newpage
\subsubsection*{Tricks for Training DQNs}
\begin{itemize}
\item \textbf{Experience Replay.}
In the $Q$-learning with function approximation, the actions are generated in an $\epsilon$-greedy fashion from the estimate of the $Q$-function at the current step.
This usually leads to correlated samples. To tackle this issue, experience replay method is used. At each time step $t$, the transition $(s_t, a_t, r_t, s_{t+1})$ 
is saved into the replay memory $\mathcal{M}$. 
In deep $Q$-learning, at each training episode, a random mini-batch from $\mathcal{M}$ is sampled to train the neural network via stochastic gradient descent.

\item \textbf{Training Stability.} By examining the Bellman squared error loss, it is seen that ground truth for the loss minimization are generated on the fly by reusing the estimate for the $Q_{\theta}$ in the target term. 
This may cause instabilities in the optimization as the target may change too quickly based on changes in $\theta$ through gradient descent. 
A trick to tackle this is the \textit{lazy update} of target network: the parameters in the $Q$-function as part of the target are updated less frequently.
\end{itemize}

The pseudo-code for the full DQN training algorithm is given below \cite{DQL_mathematical},
\begin{algorithm}[H]
\caption{DQN Training, with Exploration, Experience Replay and Lazy Target Updates}\label{DQN_training}
\begin{algorithmic}
\State \textbf{Input:} MDP setup $(\mathcal{S},\mathcal{A}, P, R, \gamma, T)$, Replay buffer $\mathcal{M}$, mini-batch size $n$, exploration parameter $\epsilon$, neural network architecture $Q_{\theta}$, $T_{\text{target}}$ for target update frequency, a sequence of SGD learning rates $\{\alpha_t\}$. 
\State \textbf{Initialize:} Replay buffer array, $Q$-network weights $\theta$, target network weights $w=\theta$.
\For{$\text{episode}=1,\cdots,K$}
\State Observe initial state $s_0$,
\For{$t=0,\cdots T-1$}
\State Choose an action $a_t$ according to an $\epsilon$-greedy policy.
\State Execute $a_t$ and observe reward $r_t$ and the next state $s_{t+1}$.
\State Store the transition $(s_t, a_t, r_t, s_{t+1})$ in the replay buffer $\mathcal{M}$.
\State Experience replay: Sample a random mini-batch of transitions $\{(s_i, a_i, r_i, s'_{i}\}_{i=1}^{n}$ from $\mathcal{M}$.
\State For each $i$, compute the target $Y_i\coloneqq \left(r_i+\gamma\max_{a'}Q_{w}(s_{i+1},a)\right)$
%\begin{equation*}
\State $\theta\leftarrow\theta-\alpha_t*\frac{1}{n}\sum_{i=1}^n\left[Y_i-Q_{\theta}(s_i, a_i)\right]*\nabla_{\theta}Q_{\theta}(s_i,a_i)$.
%\end{equation*}
\State Update the target network weights every $T_{\text{target}}$ steps: $w\leftarrow\theta$
\EndFor
\EndFor
\end{algorithmic}
\end{algorithm}

\subsection{Policy Gradient Methods}
Methods in the previous section fall under the dynamic programming approach to RL (\cite{sutton2018reinforcement}). In tabular and deep Q-learning methods, the goal is to approximate the optimal Q-function using the dynamic programming structure of the MDP. The optimal policy is then the greedy policy according to the optimal Q-function.

In the policy gradient approach, the policy is approximated directly. For an MDP defined by $(\mathcal{S},\mathcal{A},R, P, \gamma, T)$, let $\pi_{\theta}:\mathcal{S}\rightarrow\mathcal{P}(\mathcal{A})$ denote a stochastic policy parameterized by $\theta$ (this also includes the deterministic policies as a special case). This parametrization can be a deep neural network for instance, defining a distribution over the set of actions for each state.

Let $V^{\pi_{\theta}}$ denote the value function corresponding to the long term discounted reward collected by the agent following policy $\pi_\theta$ starting from the state $s$. Therefore, the goal is to find the optimal policy through solving the following optimization problem.

\begin{equation}
\max_{\theta}V^{\pi_\theta}(s) = \max_{\theta}\E\left[\sum_{t=0}^{T}\gamma^t r_t \middle| s_0=s,\pi_\theta\right]
\end{equation}

The following classical result (\cite{REINFORCE_paper}) calculates the gradient of the objective above with respect to $\theta$. Consider a trajectory $\tau=\{s_1,a_1,\cdots,a_{T-1},s_T\}$ resulting from rolling out the policy $\pi_\theta$. Define the accumulated rewards along the path as $R(\tau)$, also the probability of the trajectory as
\begin{equation*}
D(\tau)=\prod_{i=1}^{T-1}\pi(a_i|s_i)P(s_{i+1}|s_i,a_i).
\end{equation*}
\begin{theorem}\label{PG_theorem}\textbf{(Policy Gradients.)}
Consider the finite horizon roll-out of the policy $\pi_\theta$ from the initial state $s$. Then it holds that,
\begin{equation*}
\nabla_{\theta}V^{\pi_{\theta}}(s)=\E_{\tau}\left[R(\tau)\nabla_{\theta}\log(D^{\pi_\theta}(\tau))\middle|s_0=s\right] =\E_{\tau}\left[R(\tau)\sum_{t=0}^{T-1}\nabla_{\theta}\log(\pi_\theta(a_t|s_t))\middle|s_0=s\right].
\end{equation*}
\end{theorem}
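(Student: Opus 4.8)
The plan is to prove the identity via the standard ``score function'' (log-derivative / likelihood-ratio) argument. First I would write the value function as an integral over trajectories against the rollout density,
\begin{equation*}
V^{\pi_\theta}(s) = \E_\tau\left[R(\tau)\,\middle|\,s_0=s\right] = \int R(\tau)\,D^{\pi_\theta}(\tau)\,d\tau,
\end{equation*}
where $R(\tau)=\sum_{t=0}^{T}\gamma^t r_t$ accumulates the discounted rewards along the realized path. The crucial observation is that for a \emph{fixed} trajectory $\tau$ the number $R(\tau)$ does not depend on $\theta$: all of the parameter dependence is carried by the sampling density $D^{\pi_\theta}$. Taking the gradient and exchanging $\nabla_\theta$ with the integral then gives $\nabla_\theta V^{\pi_\theta}(s)=\int R(\tau)\,\nabla_\theta D^{\pi_\theta}(\tau)\,d\tau$.

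The key algebraic step is the identity $\nabla_\theta D^{\pi_\theta}(\tau)=D^{\pi_\theta}(\tau)\,\nabla_\theta\log D^{\pi_\theta}(\tau)$, valid wherever $D^{\pi_\theta}(\tau)>0$. Substituting this turns the integral back into an expectation against the same density,
\begin{equation*}
\nabla_\theta V^{\pi_\theta}(s) = \int R(\tau)\,\nabla_\theta\log D^{\pi_\theta}(\tau)\,D^{\pi_\theta}(\tau)\,d\tau = \E_\tau\left[R(\tau)\,\nabla_\theta\log D^{\pi_\theta}(\tau)\,\middle|\,s_0=s\right],
\end{equation*}
which is exactly the first claimed equality.

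For the second equality I would expand the logarithm of the product density, $\log D^{\pi_\theta}(\tau)=\sum_{i=1}^{T-1}\left[\log\pi_\theta(a_i|s_i)+\log P(s_{i+1}|s_i,a_i)\right]$, and differentiate termwise. The transition factors $P(s_{i+1}|s_i,a_i)$ are dictated by the environment rather than the agent and hence carry no $\theta$-dependence, so $\nabla_\theta\log P(s_{i+1}|s_i,a_i)=0$ and every such term drops out. This leaves $\nabla_\theta\log D^{\pi_\theta}(\tau)=\sum_{t=0}^{T-1}\nabla_\theta\log\pi_\theta(a_t|s_t)$, and inserting it into the previous display yields the stated form. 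This cancellation of the unknown dynamics is the conceptual punchline: the gradient is expressible purely through the policy, so it can be estimated from sampled rollouts without any model of $P$.

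The main technical obstacle is justifying the interchange of $\nabla_\theta$ with the trajectory integral. I would discharge this by a dominated-convergence / Leibniz-rule argument: assuming $\pi_\theta(a|s)$ is differentiable in $\theta$ with gradient dominated by an integrable envelope uniformly on a neighborhood of $\theta$, and that $R(\tau)$ is integrable, the exchange is legitimate. In the finite-horizon setting considered here the trajectory ``integral'' is a finite composition of sums over actions with the state transitions, so the regularity conditions are mild; the only genuine care needed is that $\nabla_\theta\log\pi_\theta$ be well defined along the sampled paths, i.e.\ that $\pi_\theta$ assigns positive probability to the actions actually taken.
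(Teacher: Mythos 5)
Your proposal is correct: the paper itself gives no proof beyond citing Williams' REINFORCE paper and Sutton--Barto, and the argument in those references (echoed in the paper's own follow-up remark that $\nabla_\theta \E_\theta[f(X)] = \E[f(X)\nabla_\theta \log p_\theta(X)]$) is exactly your score-function/likelihood-ratio derivation, including the cancellation of the $\theta$-independent transition factors. Your added care about interchanging $\nabla_\theta$ with the trajectory integral is a mild strengthening of what the cited sources do, not a different route.
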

\begin{proof}See \cite{sutton2018reinforcement} or \cite{REINFORCE_paper} for details.
\end{proof}

\begin{remark}
The above representation of the gradient of the objective % better wording 
is useful because it allows for calculating the gradient by rolling out the policy and estimating an expected value. 
Note that this does not need any separate and detailed knowledge of the state transition density or the reward model, 
one can use the realized states and reward values along the observed trajectories. 
Also note that for a neural network parametrization of $\pi_\theta$, 
the term $\nabla_{\theta}\log(\pi_\theta(a_t|s_t))$ can be calculated efficiently via backpropagation.
\end{remark}

\begin{remark}
The characterization of the gradient in Theorem \ref{PG_theorem} can be stated in a more general form. Let $X$ be a random variable with known p.d.f. $p_\theta(X)$. Let $J(\theta)$ be some cost function defined as
\begin{equation*}
J(\theta) = \E_{\theta}[f(X)]=\int_x f(x)p_\theta(x)dx.
\end{equation*}
for some arbitrary function $f(X)$ that does not depend on $\theta$. Then one has,
\begin{equation*}
\nabla_\theta J(\theta) =\int_x f(x)\nabla_\theta p_\theta(x)dx =\int_x f(x)p_\theta(x)\nabla_\theta \log(p_\theta(x))dx= E[f(X)\nabla_\theta \log(p_\theta(x)].
\end{equation*}
\end{remark}

Policy gradient formula gives an unbiased estimate of $\nabla_\theta V^{\pi_\theta}$,
\begin{equation}\label{PG_estimator}
\boldsymbol{\hat{g}}=\frac{1}{m}\sum_{i=1}^m\hat{R}(\tau_i)\sum_{t=0}^{T-1}\nabla_\theta\log(\pi_\theta(a^i_t|s^i_t)).
\end{equation}
This leads us to the vanilla version of the REINFORCE Algorithm (\cite{REINFORCE_paper}) for policy gradients.

\begin{algorithm}[H]
\caption{REINFORCE Algorithm}\label{REINFORCE}
\begin{algorithmic}
\State \textbf{Input} Policy parametrization $\pi_\theta(a|s)$
\State \textbf{Initialize} $\theta\in\mathbb{R}^m$
\For{$k=0,\cdots$ until convergence}
\State Generate episode $\tau=\{s_1,a_1,r_1\cdots,s_{T-1},a_{T-1},r_{T-1},s_T\}$
\State Calculate $\hat{R}(\tau) = r_1+\gamma r_2+...+\gamma^{T-1}r_{T-1}$
\For{each step $t=0,\cdots,T-1$}
\State $\theta\leftarrow \theta+\rho\hat{R}(\tau) \nabla_\theta\log(\pi_\theta(a_t|s_t))$ 
\EndFor
\EndFor
\end{algorithmic}
\end{algorithm} 

While REINFORCE algorithm works in theory, in practice, it suffers from high variance. 
The following property of the gradients enables us to lower the variance of the estimator without introducing bias.

\begin{align*}
\E\left[ \nabla_\theta\log(\pi_\theta(a_i|s_i))\right]&=\int_{a_i}\pi(a_i|s_i) \nabla_\theta\log(\pi_\theta(a_i|s_i))da_i\\
&=\int_{a_i}\nabla_\theta\pi_\theta(a_i|s_i)da_i=\nabla_\theta\int_{a_i}\pi_\theta(a_i|s_i)da_i\\
&=\nabla_\theta1=0.
\end{align*}

Using the above property, one can add a "baseline" $b$ in the empirical policy gradient without introducing bias in $\nabla_\theta(V^{\pi_\theta})$:
\begin{equation}
\boldsymbol{\hat{g}_b}=\frac{1}{m}\sum_{i=1}^m\left((\hat{R}(\tau_i)-b)\sum_{t=0}^{T-1}\nabla_\theta\log(\pi_\theta(a^i_t|s^i_t)\right).
\end{equation}

More generally, the baseline can be state-dependent,
\begin{equation}
\boldsymbol{\hat{g}_b}=\frac{1}{m}\sum_{i=1}^m\sum_{t=0}^{T-1}\left((\hat{R}(\tau_i)-b_t(s_i))\nabla_\theta\log(\pi_\theta(a^i_t|s^i_t)\right).
\end{equation}

We end this section by observing that one can further exploit the temporal structure of an MDP 
to further reduce the variance of the policy gradient estimator \eqref{PG_estimator}.
The idea is to split the path-wise sum of rewards into parts prior to step $t$ and after step $t$.
\begin{align*} 
\boldsymbol{\hat{g}}&= \frac{1}{m}\sum_{i=1}^{m}\left(\sum_{t=1}^{T-1}\nabla_\theta\log(\pi_\theta(a_t^i|s_t^i))\right)\left(\sum_{t=0}^{T-1}(r(s_t^i,a_t^i)-b(s_t^i))\right) \\ 
&=\frac{1}{m}\sum_{i=1}^m\left(\sum_{t=1}^{T-1}\nabla_\theta\log(\pi_\theta(a_t^i|s_t^i))\left[\left(\sum_{k=0}^{t-1}r(s_k^i,a_k^i)+\sum_{k=t}^{T-1}r(s_t^i,a_t^i)\right)-b(s_t^i)\right] \right)
\end{align*}

Removing terms in the reward summation that do not depend on the current action will help reduce variance. 
Therefore, denoting $G_t=\sum_{k=t}^{T-1}r(s_t^i,a_t^i)$, we have the following form for the policy gradient estimator,

\begin{equation}\label{PG2}
\boldsymbol{\hat{g}}=\frac{1}{m}\sum_{i=1}^{m}\sum_{t=1}^{T-1}\nabla_\theta\log(\pi_\theta(a_t^i|s_t^i))\left( G_t-b(s_t^i) \right).
\end{equation}

\subsubsection{Deep Deterministic Policy Gradients}
In this section, we briefly review the Deep Deterministic Policy Gradient (DDPG) method introduced in \cite{DDPG_paper}. 
DDPG can be thought as the extension of Deep Q-learning to continuous action spaces. 
Recall that in Deep Q-learning, the goal is to find the optimal action-value function $Q^*(s,a)$ 
and then finding the optimal policy as the greedy policy with respect to $Q^*$, 
needing to solving a maximization problem $\argmax_a Q^*(s,a)$. 

In the case of finite discrete actions, solving the maximization problem is possible. 
However, when the action space is continuous, this needs exhaustive search over the continuous space which is typically not feasible. 
In addition, implementing common optimization algorithms would make calculating $\argmax_a Q^*(s,a)$ a computationally expensive subroutine.
DDPG algorithms tackle this problem by using a target policy network to propose an action 
which tries to approximate $\argmax Q_{\phi_{\text{targ}}}$. 

On the policy gradient side, policy learning in DDPG aims to learn a deterministic policy $\mu_{\theta}(s)$ producing actions maximizing $Q_{\phi}(s,a)$. DDPG assumes that the Q-function is differentiable with respect to action, therefore, learning is done by performing gradient ascent (with respect to policy parameters only and treating Q-function parameters as constants) to solve,
\begin{equation}
\max_{\theta} \underset{s \sim {\mathcal D}}{{\mathrm E}}\left[ Q_{\phi}(s, \mu_{\theta}(s)) \right].
\end{equation}
Note that the above formulation allows the policy to be both continuous and deterministic.
Below, we present the pseudo-code summary of the DDPG algorithm \footnote{Adapted from \url{https://people.eecs.berkeley.edu/~pabbeel/cs287-fa19/}.},

\begin{algorithm}
\caption{Deep Deterministic Policy Gradient Algorithm}\label{Q-A2C}
\begin{algorithmic}
\State Initialize policy parameters $\theta$, Q-function parameters $\phi$, empty replay buffer $\mathcal{M}$
\State Set target parameters equal to main parameters $\theta_{\text{targ}}\leftarrow\theta$, $\phi_{\text{targ}}\leftarrow\phi$
\State Initialize replay buffer $\mathcal{M}$
\For{$\text{episode}=1,\cdots,K$}
\State Initialize a random process $\mathcal{N}$ for exploration
\State Observe initial state $s_1$ 
\For{$t=1,\cdots,T(=\text{time horizon})$}
\State Select action $a_t=\mu_{\theta}(s)+\mathcal{N}_t$
\State Execute action, receive reward $r_t$ and state transition $s_{t+1}$
\State Append the transition to the replay buffer $\mathcal{M}$
\State Sample a mini-batch of transitions $B= \{(s_i, a_i, r_i,s_{i+1})\}$ from the replay buffer $\mathcal{M}$
\State For each transition tuple, set, $\text{target} = r+\gamma Q_{\phi_{\text{targ}}}\left(s',\mu_{\theta_{\text{targ}}}(s')\right)$, where $s'=s_{i+1}$
\State \textbf{Q-update:} Update $Q$-function network weights by one-step gradient descent,
\begin{equation*}
\nabla_{\phi}\frac{1}{|B|}\sum_{(s, a, r,s')\in B}\left(Q_{\phi}(s,a)-\text{target}\right)^2
\end{equation*}
\State Get updated weights $\phi^{\text{updated}}$
\State \textbf{Policy update:} Update policy network weights by one-step gradient ascent, % since we say max (rewards), it should be ascent
\begin{equation*}
\nabla_{\theta}\frac{1}{|B|}\sum_{s\in B}Q_{\phi}(s,\mu_{\theta}(s))
\end{equation*}
\State Get updated weights $\theta^{\text{updated}}$
\State Update target networks
\begin{equation*}
\phi_{\text{targ}}\leftarrow\rho\phi_{\text{targ}}+(1-\rho)\phi^{\text{updated}}
\end{equation*}
\begin{equation*}
\theta_{\text{targ}}\leftarrow\rho\theta_{\text{targ}}+(1-\rho)\theta^{\text{updated}}
\end{equation*}
\EndFor
\EndFor
\end{algorithmic}
\end{algorithm}

\newpage
\section{Deep Trajectory-Based Stochastic Optimal Control}

Deep Trajectory-Based Stochastic Optimal Control (DTSOC), proposed in \cite{han2016deepsc} 
(also see \cite{Deep_stochastic} for an exposition), is a method for solving stochastic control problems 
through formulating the control problem as optimizing over a computational graph, with the sought controls represented as (deep) neural networks. 
The approximation power of the deep neural networks can mitigate the curse of dimensionality for solving dynamic programming problems.

We briefly review the setup of the method here. Consider a stochastic control problem given by the following underlying stochastic dynamics,
\begin{equation}
s_{t+1} = f(s_t,a_t,\xi_t)
\end{equation}
where, $s_t$ is the state, $a_t$ is the control (agent's action) and $\xi_t$ is a stochastic disturbance impacting the transition % maybe better
between times $t$ and $t+1$.  

In the models for the hedging instruments derived from (discretized) SDEs, the $\xi_t$ will be the Brownian increments in the (discretized) SDEs for time $t$ 
respective the time step from $t$ to $t+1$, $dW_t$ respective $\Delta W_i$. The price of the hedging instrument as well as the amount of them held would be part of the 
state and the action would either directly give the new amount to be held or an increment or rate that would allow the new amount to be computed. 

We assume that the actions are given as deterministic or stochastic feedback controls 
$a_t = \pi_t(s_t|\theta_t)$
or
$a_t \sim \pi_t(\cdot|s_t,\theta_t)$.
One can extend the state $s_t$ with path-dependent extra state that can be computed from current and previous state, action, and disturbances; 
and also with particular precomputed features that might lead to more efficient training of agents or more efficient agents which 
also extends the set of controls that can be written as feedback controls. 

The actions can be constrained to come from a set of admissible actions: % better word than funtions 
\begin{equation*}
a_t\in\mathcal{A}_t=\left\{a_t: %\mathbb{R}^n\rightarrow\mathbb{R}^m\middle|
g(s_t,a_t)=0, h(s_t,a_t)\geq0\right\},
\end{equation*} 
where $h(s_t,a_t)$ and $g(s_t,a_t)$ are inequality and equality constraints. We assume that these constraints are already taken into account in the feedback controls such 
that $\pi_t$ will be an admissible action or give a distribution over admissible actions. 

We assume a stepwise cost (or negative reward) function is specified $c_t=c_t(s_t,a_t,s_{t+1})$ % fix - cost also can depend on new state/action
and also a final cost $c_T(s_T)$.

Given a deterministic or probabilistic policy in feedback form that gives admissible actions, we generate 
an episode 
\[
s_0, a_0, s_1, a_2, \cdots, a_{T-1}, s_T
\]
and obtain a cumulative cost
\begin{equation}
C= \sum_{t=0}^{T-1} c_t(s_t,a_t,s_{t+1}) + c_T(s_T).  % fix iteration variable and bound (and add s_{t+1})
\end{equation}

The stochastic optimal control problem now minimizes the expected cumulative cost, conditional on starting state $s_0$. If $s_0$ is not 
fixed, this will be a function of $s_0$. 
We thus try to minimize $\E[C|s_0=s]$ or $\E[C]$ varying the policies $\pi_t$. 

With some given functional form (such as DNN) with an appropriate parametrization (for example,
determine weights and biases while activation functions are fixed for complete feedforward DNN) as deterministic policy, we obtain
\begin{equation}
 \E\left[C\right] = \E\left[ \sum_{t=0}^{T-1} c_t(s_t,\pi_t(s_t|\theta_t),s_{t+1})+ c_T(s_T) \right] =: {\cal L}\left(\left\{ \theta_t\right\}_{t=0}^{T-1} \right) =: {\cal L}(\Theta) \label{DNN_loss} % fix iteration variable and bound, add next state 
\end{equation}

One now jointly optimizes over all policies $\{\pi_t\}^{T-1}_{t=0}$ respective over all parameters of such $\left\{ \theta_t\right\}_{t=0}^{T-1}$ to optimize the above 
cumulative costs (if $s_0$ is not fixed, this will also be a function of $s_0$ and we would need to take an appropriate expectation over $s_0$ or keep $s_0$ as a parameter).

The controls at each time step could be stacked into a computational graph with a loss function given in \eqref{DNN_loss}. 
For each roll out of the control problem, this computational graph takes the sequence of disturbances $\{\xi\}_{t=0}^{T-1}$ as input and gives the accumulated cost inside the expectation in \eqref{DNN_loss} as output. As demonstrated in Figure \ref{DNN_control_arch}, the computational graph has the following features:

\begin{itemize}
\item The deterministic policy at time step $t$ is represented by some network with appropriate architecture, here for example a complete feed-forward neural network, $s_t\rightarrow a_t$ with parameters $\theta_t$
\item The transition of the system to a new state, $(s_t, a_t)\rightarrow s_{t+1}$ based on the system dynamics is encoded in the inter-block connections between $a_t$ and $s_{t+1}$.
\item Defining the cumulative cost up to time $t$ as,
\begin{equation*}
C_t = \sum_{\tau=0}^t c_\tau(s_\tau,a_\tau,s_{\tau+1}), % changed t to \tau where appropriate, added next state
\end{equation*}
the horizontal connections on top of the network, $(s_t,a_t,C_t)\rightarrow C_{t+1}$ sums up the immediate costs and gives the total accumulated cost at the end of the episode (when $t=T$).
\item There are also connections from $s_t$ and $s_{t+1}$ to $C_t$ but they are not shown in the figure to avoid clutter. 
\end{itemize}

Note that based on a discretization $\{0=t_0<t_1,\cdots,t_p=T\}$ of the time horizon, the computational graph will have $p$ embedded DNN. 
%) and $p\times\sum_{t=0}^{T-1}N_t$ trainable parameters. - 
If these embedded DNN do not share parameters and have $NP$ trainable parameters each, the entire computational graph has then $p \times NP$ trainable 
parameters.  
\begin{figure}[!htbp]\label{DNN_control_arch}
\begin{center}
\fbox{\includegraphics[height =6.5cm,width=14cm]{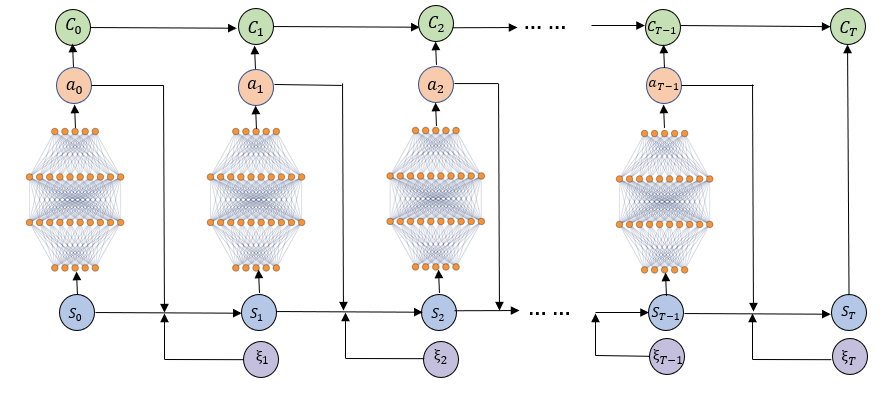}}
\caption{Neural computational graph for DTSOC.}
\end{center}
\end{figure}

If desired and appropriate for the application, one can introduce an additional future discount factor $\gamma$ which balances the importance of optimizing immediate versus future costs by setting 
$\tilde{c}_t= \gamma^t c_t$.  

The pseudo-code for training the neural network computational graph is described in Algorithm \ref{DSOC_training} below. 

\begin{algorithm}[H]
\caption{Training Procedure for DTSOC}\label{DSOC_training}
\begin{algorithmic}
\State \textbf{Data:} The model and parameters for the underlying dynamics %, discount factor $\gamma$
\State \textbf{initialize:} Network weights $\{\theta_m, m=0,\cdots,T-1\}$
\While {epoch$\leq EPOCH$}
\State $nbatch=0$
\State $batchloss=0$
\While {$nbatch\leq \text{batch size}$}
\State $m=0$
\State $C_{-1}=0$
\While {$m<T$}
\State $a_m = F^{\theta_{m}}(s_{m})$\
\State $s_{m+1} = f (s_{m}, a_{m},\xi_m)$ with sampled noise $\xi_m$. 
\State $C_m=C_{m-1} + c_m(s_{m},a_{m},s_{m+1})$
\State $\text{m++}$
\EndWhile
\State $C_T=C_{T-1} + c_T(s_T)$
\State $batchloss+=C_T$
\EndWhile
\State Calculate loss for batch $$\text{Loss}=batchloss/(\text{batch size})$$
\State Calculate gradient of $\text{Loss}$ with respect to $\theta$
\State Back propagate updates for $\{\theta_m,m=0,\cdots,T-1\}$ 
\State $\text{epoch++}$
\EndWhile
\State \textbf{return} optimized weights $\{\theta^*_m,m=0,\cdots,T-1\}$ 
\end{algorithmic}
\end{algorithm}

\subsection{Relationship to FBSDE Formulation of Stochastic Control}

In general, one can consider a stochastic control problem in which some functional defined by running and final costs 
(which depend on the evolution of some controlled forward SDE and on the control) is optimized over that control. 
This leads  to coupled \textit{forward backward stochastic differential equations} (FBSDE) and non-linear PDEs 
(See \cite{perkowski2011backward} or \cite{pham2009continuous} for an introductory treatment). In our setting, the control 
that the agent tries to optimize does not impact the forward SDEs describing the evolution of the prices of the 
instruments, it only impacts the trading strategy, leading to a controlled backward SDE only.  

With $X_t$ being the factors and prices for the (hedging) instruments and $Y_t$ being the value of the 
hedging %sounds better than: replicating/pricing 
strategy, we have the system (see \cite{hientzsch2019introduction})\footnote{
In this subsection, we use notation from the FBSDE literature as adapted to the pricing and hedging domain and do not follow the generic 
notation for RL or trajectory-based approaches. The state $s_t$ in RL or trajectory-based approaches 
would contain $X_t$, $Y_t$, $\Pi_t$, $J_t$, and whatever is needed to compute terms and costs (or equivalent information), the action/control
would be some parametrization of $\Pi_t$, the stochastic disturbance $\xi_t$ would be the $dW_t$ or $\Delta W^i$.
}, 
\begin{equation}
dX_{t} = \mu\left( t,X_{t} \right)dt + \sigma\left(t,X_{t}\right)dW_{t}
\end{equation}
\begin{equation}
 dY_{t} = - f\left( t,X_{t},Y_{t},\Pi_{t} \right)dt\  + \Pi_{t}^{T}
\sigma\left(t,X_{t}\right) dW_{t}
\end{equation}
where $\Pi_{t}$ plays the role of a control or strategy and the functional to be optimized (typically, minimized) is 
\begin{equation}
J^F(\Pi_t,\Pi^{\mathsf{final}},\ldots) = E\left(\int_0^T
{\mathsf{rc}}(s,{X}_s,Y_s,\Pi_s) ds + {\mathsf{fc}}({X}_T,Y_T,\Pi^{\mathsf{final}})\right).
\end{equation}
For the example of an European option, $Y_T$ has to replicate the appropriate payoff $g(X_T)$. 

One can define 
\[ J_t =  \int_0^t {\mathsf{rc}}(s,{X}_s,Y_s,\Pi_s) ds \]
or 
\[ dJ_t = {\mathsf{rc}}(t,{X}_t,Y_t,\Pi_t) dt \]
and add it to the stochastic system, looking for a minimum of 
\[ \E \left( J_T +  {\mathsf{fc}}({X}_T,Y_T,\Pi^{\mathsf{final}})\right).\]

One can derive FBSDE characterizing the optimal controls (both primal and dual/adjoint) as well as
PDEs characterizing them, but we will here concentrate on approaches that try to directly optimize 
over the given system for $X_t$, $Y_t$, $\Pi_t$, and $J_t$. 

Upon time-discretization, one obtains stochastic control problems defined on (controlled) FBS$\Delta$E ($\Delta$ standing for "difference") where now the 
running cost can depend on the forward and backward components and the control at both the beginning and end of each time-period. 

Applying a simple Euler-Maruyama discretization for both $X_t$ and $Y_t$, we
obtain 
\begin{equation}
X_{t_{i+1}} = X_{t_i} + \mu(t_i,X_{t_i}) \Delta t_i + \sigma^T(t_i,X_{t_i})
\Delta W^i
\end{equation}
\begin{equation}
Y_{t_{i+1}} = Y_{t_i} - f\left( t_i,X_{t_i},Y_{t_i},\Pi_{t_i} \right) \Delta 
t_i + \Pi^T_{t_i} \sigma^T(t_i,X_{t_i}) \Delta W^i
\end{equation}
This can be used to time-step both $X_t$ and $Y_t$ forward. 

Now, $Y_T$ in general can no longer perfectly replicate $g(X_T)$ but still needs to be constrained to replicate the instrument 
in an appropriate sense, such as by minimizing the squared differences $E(|Y_T-g(X_T)|^2)$. 

Similarly, the running costs need to be accumulated
\begin{equation}
J_{t_{i+1}} = J_{t_i} + {\mathsf{rc}}(t_i,{X}_{t_i},Y_{t_i},\Pi_{t_i}) \Delta t_i
\end{equation}
and the stochastic optimal control problem will try to minimize
\[ \E \left( J_T +  {\mathsf{fc}}({X}_T,Y_T,\Pi^{\mathsf{final}})\right).\]

A time-discrete setting allows one to incorporate more general transaction costs for $Y_t$ \cite[Section 7.2]{hientzsch2019introduction} 
by more complicated generators $f$
\begin{equation}
Y_{t_{i+1}} = Y_{t_i} - f_{\Delta t} \left( t_i,\Delta t_i,
X_{t_i},X_{t_{i+1}},Y_{t_i},Y_{t_{i+1}},\Pi_{t_i},\Pi_{t_{i+1}} \right) 
+ \Pi^T_{t_i} \sigma^T(t_i,X_{t_i}) \Delta W^i
\end{equation} 
and also more complicated running costs 
\begin{equation}
J_{t_{i+1}} = J_{t_i} + {\mathsf{rc}}\left(t_i,\Delta t_i,{X}_{t_i},X_{t_{i+1}},Y_{t_i},Y_{t_{i+1}},\Pi_{t_i},\Pi_{t_{i+1}}\right) \Delta t_i,
\end{equation}
which could include running costs that depend on the profit and loss of some strategy across the corresponding time interval. 

Since with appropriate generator, final values, and settings, the backward component $Y$ corresponds to a replication or risk-management 
price of the instrument under appropriate models, one could use the solution $Y$ of the  FBS$\Delta$E and/or P$\Delta$E 
in appropriate form as the book-keeping and/or consistent model in
hedging or risk management setups that require such models, as in the stepwise mean-variance hedging setup discussed here. 
For such setups, it might be advantageous  to introduce several $Y$ processes, one reflecting a book-keeping or consistent 
model according to some set of assumptions 
(and thus providing such a book-keeping or model price for use after applying standard FBSDE techniques) while one or several
others reflect the performance of other strategies that will be measured against such models. 

In \cite{weinan2017deep,hientzsch2019introduction,hientzsch2021intro,ganesan2019pricingbarriers,ganesan2022pricingbarriers,liang2019deep,liang2021deep}, 
path-wise deepBSDE methods for such problems are discussed, at least applied to pricing and risk management where 
there is only a final cost (or a cost at the earlier of reaching a barrier or maturity) - as in a quadratic hedging setup. 

DeepBSDE methods represent the strategy $\Pi_t$ as a DNN depending on appropriate state $X_t$ (or features computable from such state). 
Path-wise forward deepBSDE methods generate trajectories of $X$ and $Y$ starting from initial values $X_0$ and $Y_0$ 
according to the current strategy. They then use stochastic gradient  descent type approaches such as ADAM to improve the strategy
until an approximate optimum is reached. 
If the initial wealth $Y_0$ is not given, it will be determined by the optimization as well. 
If the starting value $X_0$ of the risk factor vector is fixed, $Y_0$ would be a single value, 
otherwise it would be a function of $X_0$. The optimization problem would typically represent
this function as a DNN. 
Derived so far for certain kinds of final costs or where one attempts to replicate the final payoff as well as possible, path-wise backward deepBSDE 
methods make the same assumptions, but on each generated forward trajectory of $X$, they start with an appropriate final value of $Y_T$ ($Y_T=g(X_T)$ for
the final payoff case), compute a corresponding trajectory of $Y_t$ by stepping backward in time, and try to minimize the range of $Y_0$. 

Quadratic hedging for European options has been considered with forward and backward path-wise methods in \cite{liang2019deep,liang2021deep}
for linear pricing and in \cite{yuhientzsch2019backward} for nonlinear pricing (differential rates) while forward path-wise methods were
introduced earlier by  \cite{weinan2017deep}. The barrier option case is treated with forward methods in \cite{ganesan2019pricingbarriers,ganesan2022pricingbarriers}. 
We are not aware of any path-wise deepBSDE (or other FBSDE-related) method being used for stepwise mean-variance hedging problems 
and might treat this setting in future work. 

One difference between the setup discussed in this section and 
other approaches considered in the paper is that here the backward SDE or  S$\Delta$E 
is written in such a way that it uses parts of the forward model (i.e. $ \sigma^T(t_i,X_{t_i}) \Delta W^i$)
and might not as written satisfy self-financing exactly but only up to discretization accuracy. In this way, this is a (more) model-based approach. 
However, one can rewrite the BSDE for $Y$ so that it only uses the stochastic increment of $X$ (i.e., written in $dX$ rather than using model details about $X$) 
and one can rewrite the BS$\Delta$E so that it only requires observations of $X$ at trading times $t_i$ and perfectly preserves self-financing 
similarly to what we wrote in earlier subsections, obtaining methods that will be more similar to the ones discussed there.  

\section{Experimental Setup and Model Specification}
Following \cite{Kolm_Ritter}, we consider the example of a European call option with strike price $K$ and maturity $T$ on a non-dividend-paying stock. 
The strike price and option maturity are considered as fixed parameters. 
It is assumed that the risk-free rate is zero and that the option position is held until maturity. 
Rebalancing of the hedging portfolio is allowed at fixed (often regular) times 
and trades are subject to transaction costs. 
The trained hedging agent is expected to learn to hedge an option with this specific set of parameters. 
%It is not being trained to hedge any option with any possible strike/maturity.
It is possible to train parametric agents that can hedge a parametrized set of options 
(such as calls with various strikes), but we will not do so here. 
We use the Black-Scholes model for the simulation environment 
where the stock dynamics is given by a geometric Brownian motion 
\begin{equation}
dS_{t} = \mu S_t \; dt + \sigma S_t \; dW_t \label{SDE:Stock},
\end{equation}
and trading (buying and selling) of stock incurs a transaction cost which is assumed to have the functional form (see \cite{Kolm_Ritter}),
\begin{equation}
\text{cost}(S_t, \delta H_t) = \alpha S_t(|\delta H_t|+\beta\times \delta H_t^2),
\end{equation}
where $\delta H_t$ is the change in the stock position.

The default parameters for the stock, the option, and the market friction 
are as given in Table \ref{defaultparameters}.
\begin{table}[htb]
\begin{center}
\begin{tabular}{| l || l | } \hline
Parameter & Value \\ \hline
$\mu$ & 5 \% (rate of return) \\ \hline
$\sigma$ & 20 \% (volatility) \\ \hline
$ir$ & 0.0 \% (interest rate) \\ \hline
$S_0$ & 100 \\ \hline
$K$ & 100 \\ \hline
$T$ & 30 (option maturity- days) \\ \hline
$\alpha$ & 0.01 (friction parameter)\\ \hline
$\beta$&0.01 (quadratic cost factor)\\ \hline
\end{tabular}
\caption{Parameters of the stock, the option, and the market friction.\label{defaultparameters}}
\label{Table1}
\end{center}
\end{table}

\subsection{ Feature Engineering and Model Architecture}\label{model_architecture}
The following variables were chosen to represent the state space of the problem, given as inputs to the RL and DTSOC agent at each time step,
\begin{itemize}
\item Time $t$, $0\leq t\leq T$,
\item Stock price at time $t$,
\item Option price $C_t$,
\item Option Delta\footnote{Computed or modeled with an appropriate model.} $\Delta_t$,
\item Current stock holding\footnote{Impacted by agent's action in previous step(s).}  $H_t$.
\end{itemize}

The input to the agent should give the agent all of the information needed for optimal decision making. 
This leads to feature engineering to identify necessary or helpful features to add to the state. 
%The feature engineering of the state variables must give the agent all of the information needed for optimal decision making. 
In \cite{Kolm_Ritter}, a minimal set of variables including the current price of stock, either time to maturity or time, and current stock position is considered. 
In section \ref{test_feature_delta}, we test and discuss whether it is necessary and/or reasonable to include option Delta as a feature in the state. 

As for the neural network formulation of deep-MVH, one can either parametrize the hedge strategy (the number of stock to hold at time $t$) 
or parametrize the rebalancing rate. The two parametrizations are related through the equation, 
\begin{equation}\label{rate_to_ratio}
H_{t_{m+1}}^{\theta_{m+1}}=H_{t_{m}}^{\theta_{m}}+ \dot{H}_{t_m}^{\theta_{m}}\Delta_t.
\end{equation}
We choose to parametrize the rebalancing rate of the stock holding by a feedforward neural network with three hidden layers with 10, 15 and 10 hidden units. 
This is similar to how other implementations in the literature have used relatively shallow deep networks with three layers to parametrize 
each individual control (see \cite{han2016deepsc}, \cite{shi2021deep}).
RELU was used as the nonlinear activation function in all hidden layers. 
The actor component of the RL-DDPG agent is a feedforward neural network with 3 layers with 12 neurons in each layer. 
The critic component consists of a feedforward neural network of 3 layers with 24 neurons in each layer. 

\subsubsection{Should Option Delta be Included as a State Feature?}\label{test_feature_delta}
Since the option Delta is provided as a state space feature to both deep-MVH and RL-DDPG models, 
it is reasonable to ask what the model learns when the transaction cost is zero 
- since the optimal strategy for the continuous trading limit case (option Delta) is already given to the model as part of the input. 
To make the learning more difficult for the agents, we removed the option Delta from the state space 
and trained both RL-DDPG and the deep MVH models to hedge with no transaction cost.

Figure \ref{benchmark_rl_deep_MVH_wo_delta_feature} shows that, even without access to the option Delta, 
deep-MVH performs as well as the Delta hedge if not better. 
The RL-DDPG strategy, however, seems to underperform relative to the deep-MVH as the 
mean of the total hedging cost has shifted to the right. 
This shows that we may drop the option Delta feature at least for the deep-MVH model, 
however, we decided to keep it to have identical state space specifications for both algorithms.

\begin{figure}[h]
\begin{center}
\fbox{\includegraphics[width=10cm,height=8cm]{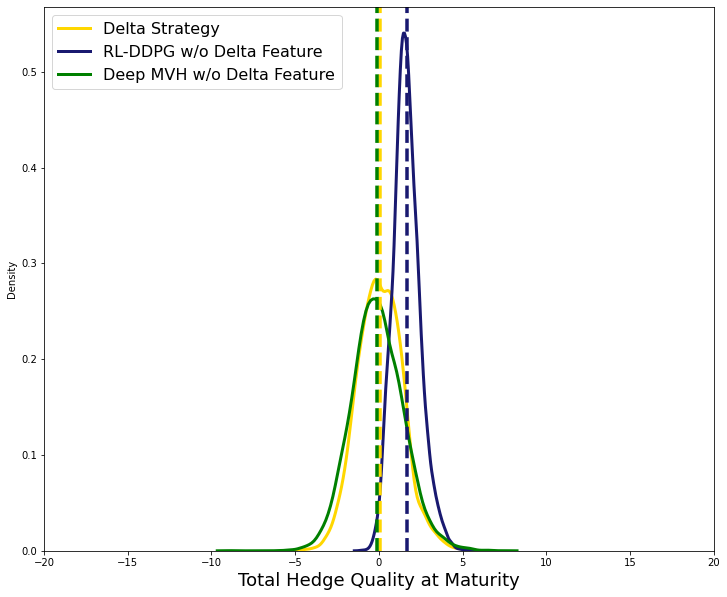}}
\caption{Histogram of total hedging cost per episode under zero transaction cost; Delta Hedge vs. deep-MVH vs. RL-DDPG without the option Delta feature as input.  The $x$-axis is the total hedging cost during the life of the option. Positive values denote loss and negative values denote profit. Deep-MVH performs as well as the Delta hedge or better, while RL-DDPG underperforms. }\label{benchmark_rl_deep_MVH_wo_delta_feature}
\end{center}
\vspace{-5mm}
\end{figure}

The box plots in Figure \ref{BP_Delta_vs_deep_MVH_no_Delta} below compare the statistics of decisions made by the deep-MVH model 
and the Delta hedge strategy in the intermediate steps. 
It is observed that if the option Delta is not provided to the deep-MVH model, 
the average decisions in terms of number of stocks to hold are reasonably close to those of Delta hedge strategy. 
It is also seen that the differences between the strategies decrease in the second half of the life of the option.

\begin{figure}[h]
\begin{center}
\fbox{\includegraphics[width=12cm,height=8cm,keepaspectratio]{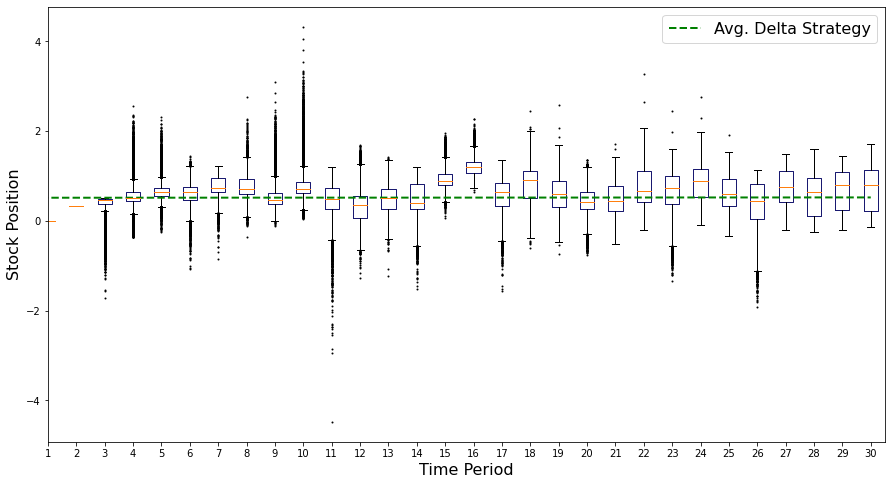}}
\caption{The box plot of decisions made by the deep-MVH model (without option Delta feature). The transaction cost is zero. The average of deep-MVH decisions (orange lines) are relatively close to average Delta. It is also observed that, in the second half of time-to-maturity, differences between the strategies decrease. }\label{BP_Delta_vs_deep_MVH_no_Delta}
\end{center}
\end{figure}

We use SHAP values (see Section \ref{SHAP} for a brief review) to gain insight how much the agents rely on each state feature 
in the absence of option Delta. 
Figure \ref{deep_MVH_explain_heatmap} shows the SHAP variable importance values for each of the 30 intermediate policies in deep MVH.

\begin{figure}[h]
\begin{center}
\fbox{\includegraphics[width=12cm,height=8cm,keepaspectratio]{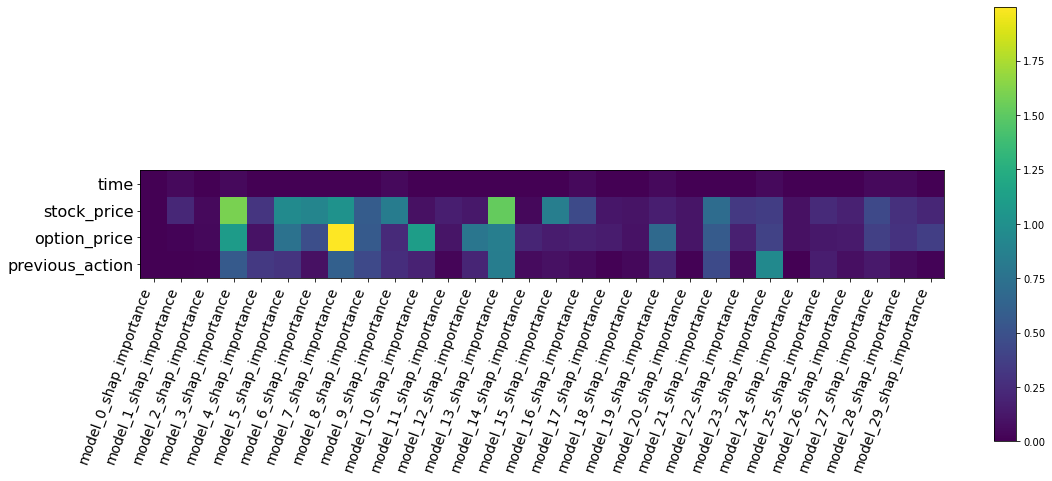}}
\caption{Heat map of SHAP values for the 30 policies in deep MVH computation graph (one per time step, no Delta feature). }
\label{deep_MVH_explain_heatmap}
\end{center}
\end{figure}

To summarize, we take average of the SHAP variable importance for each variable along all the time steps
and show the results for RL-DDPG and deep-MVH agents in Figure \ref{SHAP_RL_deep_MVH_no_delta}.

\begin{figure}[h]
\centering
\fbox{
\hfill
\subfigure{\includegraphics[height = 3.5 cm, width=6.8cm]{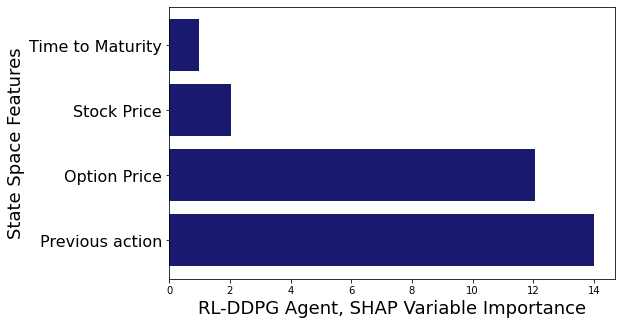}}
\hfill
\subfigure{\includegraphics[height = 3.5 cm,width=6.8cm]{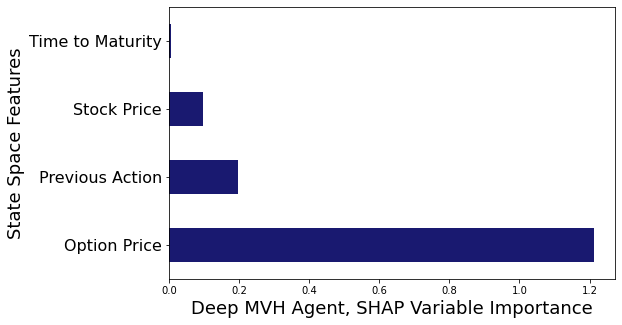}}
\hfill
}
\caption{Global SHAP variable importance for the RL-DDPG and deep MVH without option Delta feature (zero transaction cost).}\label{SHAP_RL_deep_MVH_no_delta}
\end{figure}

We repeated the same test as above but this time with non-zero transaction cost ($\alpha=0.01$). 
Figure \ref{benchmark_rl_deep_MVH_wo_delta_feature_nonzero_TC} shows that without option Delta as input, RL-DDPG agent performs worse than deep-MVH also when 
transaction costs are present, but still better than the Delta hedge strategy.  

\begin{figure}[h]
\begin{center}
\fbox{\includegraphics[width=10cm,height=8cm]{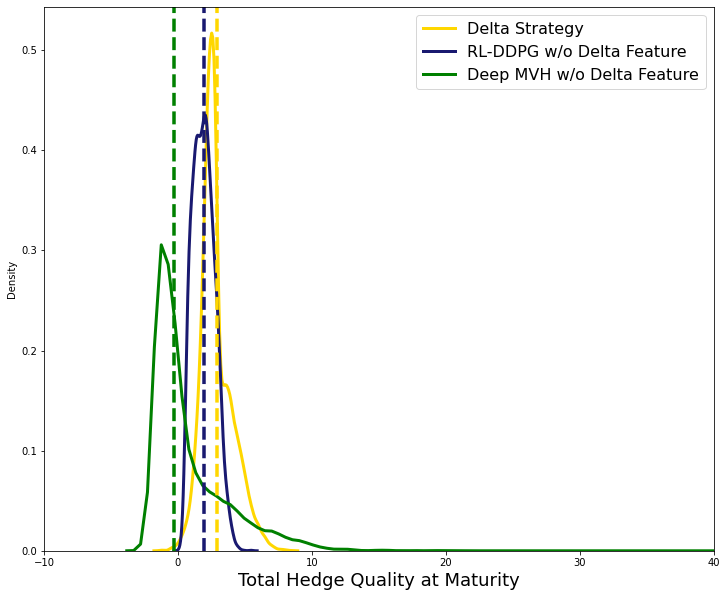}}
\caption{Histogram of total hedging cost per episode with transaction cost; Delta Hedge vs. deep-MVH (without option Delta) vs. RL-DDPG (without option Delta).  The $x$-axis is the total hedging cost during the life of the option. Positive values denote loss and negative values denote profit. Delta hedge underperforms both RL-DDPG and deep-MVH. Without option Delta, deep-MVH outperforms RL-DDPG. }\label{benchmark_rl_deep_MVH_wo_delta_feature_nonzero_TC}
\end{center}
\end{figure}

Figure \ref{SHAP_RL_deep_MVH_no_delta_nonzero_TC} summarizes how important the different state features are to RL-DDPG and 
deep-MVH agents if there are transaction costs and if option Delta is not an input. 
The ordering of features based on SHAP importance remains the same for RL-DDPG regardless whether transaction costs are absent or present. 
While option price was most important for deep-MVH without transaction costs and stock price (and other features) were much less important in that case, 
with transaction costs, stock price became the most important feature while option price now became slightly less important than the stock price. 

\begin{figure}[h]
\centering
\fbox{
\hfill
\subfigure{\includegraphics[height = 3.5 cm, width=6.8cm]{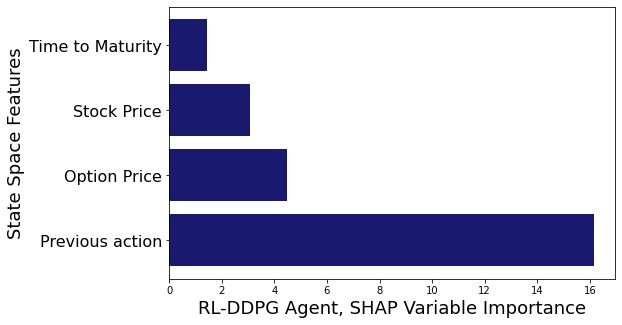}}
\hfill
\subfigure{\includegraphics[height = 3.5 cm,width=6.8cm]{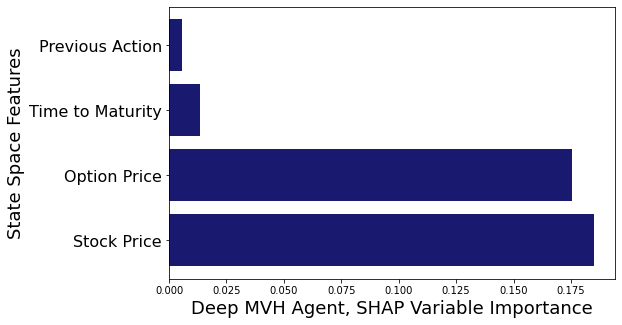}}
\hfill
}
\caption{Global SHAP variable importance for the RL-DDPG and deep-MVH without the option Delta feature under non-zero transaction cost.}\label{SHAP_RL_deep_MVH_no_delta_nonzero_TC}
\end{figure}

\subsection{Model Training}\label{model_training}
The model was trained on a computer with 8 core i7-11850 CPU $@$ 2.50GHz processor and 32.0 GB memory. 
For the RL-DDPG model, both actor and critic are trained with respect to a mean squared error (MSE) loss 
with the TensorFlow implementation of ADAM optimizer, actor learning rate and critic 
learning rate are set to $\expnumber{1}{-5}$ and $\expnumber{1}{-4}$ respectively. 
The smoothing parameter for updating the weights of the target network is $\expnumber{1}{-3}$. 
We trained the DDPG-model for one epoch consisting of 5000 training episodes. 
Following \cite{DDPG_paper}, for exploration noise, we used the Ornstein-Uhlenbeck process 
with mean-reversion and volatility parameters fixed at 0.15 and 0.2 respectively. 
The above parameters were kept fixed for both zero and non-zero transaction cost regimes.

The deep-MVH model was trained using ADAM optimization with Pytorch off-the-shelf parameters. 
Batch normalization and dropout ($p=0.25$) were applied to each layer. 
The learning rate was fixed at $\expnumber{1}{-3}$ initially and was decreased dynamically as training epochs progressed. 
The model was trained for 100 epochs and for each training epoch, a fresh batch of 50,000 training samples was simulated. 

\section{Outcome Analysis}
In this section we examine the performance of the trained RL and deep-MVH under various scenarios. 
To compare the performance of the algorithms, we studied the histogram for the total hedge 
quality at maturity (also called the hedge P\&L or hedge slippage- the wealth from the 
underlying portfolio minus the transaction cost and the final option pay-off) for various test episodes. 
Note that based on our conventions, positive values denote trader's loss. 
Note that all the testing is conducted on a sample test of 1000 episodes, drawn independently from the training sample.

\subsection{Approximating Delta Hedge under Zero-Transaction Cost}
As a first step, we perform a sanity check and look at the performance of the data driven hedging strategies in the zero-transaction cost regime. 
Our expectation is that up to the error introduced by the time discretization, the RL and deep-MVH strategies would approximate the Delta hedge. 
The performance of the three hedging strategies is shown in Figure \ref{pnl_plot_zero_TC}.
It is observed that both RL and deep-MVH lead to a distribution of the final hedge P\&L that is close to the Delta hedge one.

\begin{figure}[h]
\begin{center}
\fbox{\includegraphics[width=10cm,height=8cm]{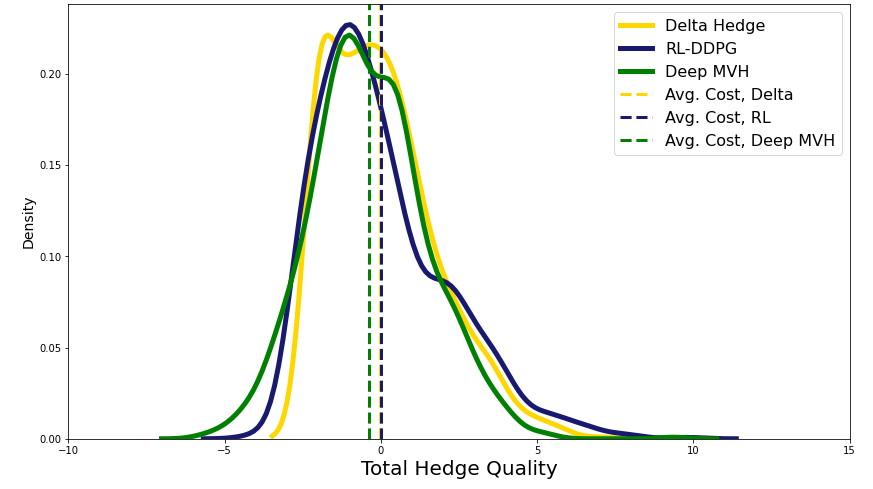}}
\caption{Hedge quality at maturity, zero transaction cost. The $x$-axis is the total hedging cost during the life of the option. Positive values denote loss and negative values denote profit.}\label{pnl_plot_zero_TC}
\end{center}
\end{figure}

\subsection{Increasing Transaction Cost}
For this test, the transaction cost parameter was increased gradually as all other parameters were kept fixed. 
The Delta hedging strategy is expected to underperform the other strategies as the transaction cost increases, 
however, we expect the data driven algorithms to learn the trade-off between the hedge quality and
transaction cost at training time and hence exhibit a more stable behavior. 
The results of the tests are shown in Figure \ref{pnl_plot_increasing_TC} below.

\begin{figure}[h]
\begin{center}
\fbox{\includegraphics[width=\textwidth,height=\textheight,keepaspectratio]{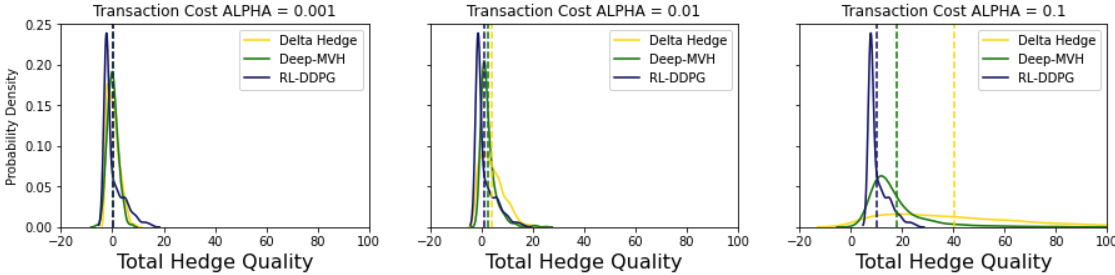}}
\caption{Hedge quality at maturity as transaction cost increases. The $x$-axis is the total hedging cost during the life of the option. Positive values denote loss and negative values denote profit.}\label{pnl_plot_increasing_TC}
\end{center}
\end{figure}

\subsection{Longer Option Maturity}\label{deep_MVH_maturity}
At a high level, hedging an option with longer maturity could be more difficult as it translates to exposure to more uncertainty.
To test this, we increased the option maturity and examined the performance of RL-DDPG and deep-MVH.

In Figure \ref{pnl_plot_longer_maturity}, it can be observed that the performance of deep-MVH model deteriorates noticeably
 as the option maturity increases. Other studies in the literature \cite{shi2021deep} reported similar behavior for other hedging problems.

This behavior may be related to the fact that the computational graph in deep-MVH becomes deeper as maturity increases 
(as opposed to the RL algorithm which trains a stationary policy that takes time as input) and this may affect how well the model can be trained.

\begin{figure}[h]
\begin{center}
\fbox{\includegraphics[width=\textwidth,height=\textheight,keepaspectratio]{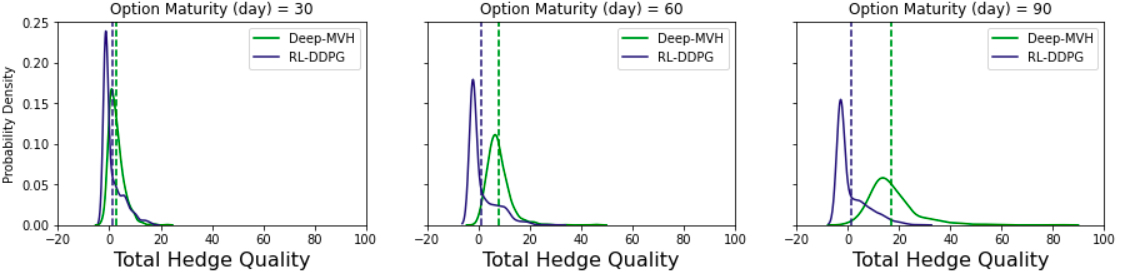}}
\caption{Hedge quality for longer option maturities. The $x$-axis is the total hedging cost during the life of the option. Positive values denote loss and negative values denote profit.}\label{pnl_plot_longer_maturity}
\end{center}
\end{figure}

\subsection{Increasing the Risk-Aversion}
To examine how the strategies trade off transaction cost with hedging P\&L variance, we consider the effect of increasing the risk aversion parameter $\lambda$.
Figure \ref{pnl_plot_increasing_lam} below demonstrates the distribution of total hedging cost for the agents. 
It is seen that the dispersion of the distributions (more noticeably for the deep-MVH agent) 
decreases as the value of the risk aversion parameter increases, as expected.  

\begin{figure}[h]
\begin{center}
\fbox{\includegraphics[width=\textwidth,height=\textheight,keepaspectratio]{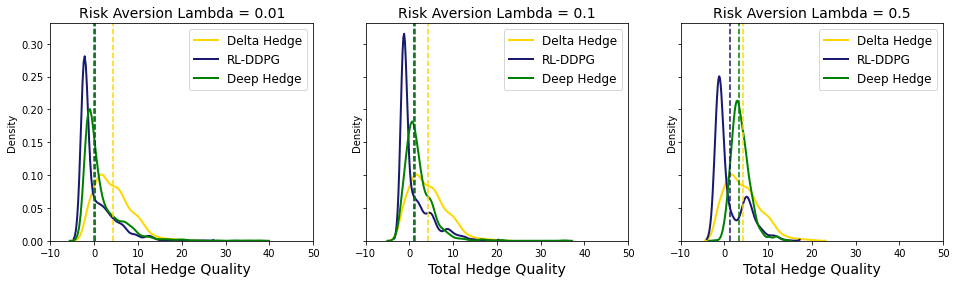}}
\caption{The distribution of total hedging cost becomes squeezed as the risk aversion parameter increases. The $x$-axis is the total hedging cost during the life of the option. Positive values denote loss and negative values denote profit.}\label{pnl_plot_increasing_lam}
\end{center}
\end{figure}

\subsection{Increasing Asset Volatility}
All things equal, more volatile markets will mean that hedge performance will be more volatile (and more dispersed) as well. 
In order to examine this effect, the volatility of the stock was increased and the performance of the hedging agents was examined. The results are presented in Figure \ref{pnl_plot_increasing_vol}. It is observed that the distributions of the realized hedging cost for all three strategies become
considerably wider as the volatility increases with the RL strategy showing a larger dispersion. Also, at high volatility levels, the mode of the distribution for RL appears to shift to the negative territory (meaning that gains from stock trades make up for transaction costs) while the deep-MVH maintains the mode around zero.

\begin{figure}[h]
\begin{center}
\fbox{\includegraphics[width=\textwidth,height=\textheight,keepaspectratio]{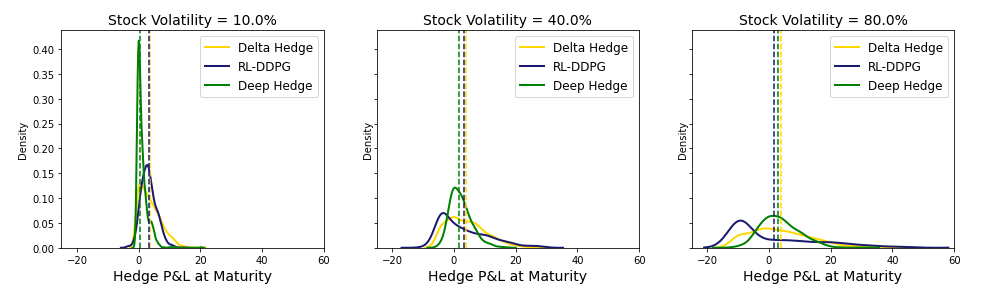}}
\caption{Hedge quality for higher asset volatility levels. The $x$-axis is the total hedging cost during the life of the option. Positive values denote loss and negative values denote profit.}\label{pnl_plot_increasing_vol}
\end{center}
\end{figure}

\section{Visualization and Interpretation}
In our experience, one of the first reactions from quants or traders, 
when exposed to the methods discussed here, is to point out their "black-box" nature. 
The trained agents generate suggested optimal actions without giving explanations. 
They do not learn parameters or parametric functions within an otherwise well-defined,
parsimonious, and understandable model. 
In this way, ML models, in particular deep neural network-based models like DDPG or deep-MVH,
achieve often state-of-the-art or at least good performance without parsimony and transparency. 
Their complexity results in increased predictive power but requires additional 
work to understand, explain, and control the agents
(A discussion of the {\em black box risk} of using ML approaches
in quantitative finance, in particular for pricing and hedging, 
can be found in \cite{cohen2021blackbox}). 

Research directions in ML such as Explainable Artificial Intelligence (XAI) and 
interpretable ML try to produce methods for either providing post-hoc human understandable explanations 
for the behavior of the model or design ML algorithms which are explainable by design 
(see \cite{Survey_ML_Explainability} for a survey). 
However, there has also been critique of using ML interpretation methods as the principal way to 
establish trust in the model (see \cite{Mythos_Interpretability}). 

For instance, one can find inconsistent definitions and motivations for interpretability in the literature
and there are different opinions about what constitutes an interpretable model. 
Also, different interpretability methods can produce several different interpretations of 
the outputs and decisions of a given model that are contradicting each other or are inconsistent with each other.

Therefore, it is not enough to produce one set of or a variety of interpretations with these interpretability 
and explainability approaches. While such approaches and methods can produce valuable insight (in particular
if several of these approaches give consistent results) or can inspire further investigations, more needs to 
be done to establish trust in the models and agents and their decisions. One still needs to assess whether 
the model is adequate and one needs to establish whether it was properly trained and tested. Generalization 
performance needs to be investigated. Model robustness and performance under generic and stressed 
conditions need to be studied. Conditions under which model performance deterioates or it becomes less
confident in its decisions must be identified and appropriate monitoring processes and safeguards need 
to be implemented. Only through this wide variety of actions can one establish trust, understanding, 
and appropriate control of such models.

\subsection{SHAP Values}\label{SHAP}
SHAP (SHapley Additive exPlanations) is a method for local interpretation of a ML model output  (\cite{lundberg2017unified}). The method attributes each individual prediction to the different predictors. For the sake of completeness, we give a high-level introduction of the concepts first.

For a ML model, the input features of the model are considered as the players of a game and the model output is the game outcome. Shapley values assign an importance score to each input (or parts of the input). For a set of $p$ input features, consider the index set $\mathcal{F} = \{1, 2, . . . , p\}$, all features in a certain coalition $S\subset\mathcal{F}$ cooperate towards the outcome $val(S)$ with the default $val(\emptyset) = 0$. Shapley values redistribute the total outcome value $val(\mathcal{F})$ among all features based on their average individual contribution across all possible coalitions $S$. 
The individual contribution of the feature $i$ in the coalition $S$ is defined by,

\begin{equation*}
\Delta(i,S) = val(S\cup\{i\})-val(S).
\end{equation*}

The Shapley value for the feature is then calculated by averaging across the coalitions,

\begin{equation*}
\phi(i) = \sum_{S\subset\mathcal{F}\setminus\{i\}}w_{S}\Delta(i,S),
\end{equation*}
where $w_{S}$ is a weighting term based on the number of choices for the subset S. 
Shapley values satisfy the properties of \textit{efficiency} ($\sum_{i\in\mathcal{F}}\phi(i)=val(\mathcal{F})$), 
\textit{Symmetry}, \textit{dummy} and \textit{additivity} (see \cite{lundberg2017unified} for details). 
The interpretation of the Shapley values for an ML model is the following. In this case, the outcome $val$ of the game
is the output of the model $f$ and Shapley values $\phi_f(i$) measure the "significance" of each feature $i$.

The SHAP framework introduced in \cite{lundberg2017unified} comes with a unifying perspective 
on connecting estimation of Shapley value (computing the Shapley value is NP-Hard) with several other popular explainability methods. 
Furthermore, they propose SHAP values as a unified measure of feature importance and prove them to be the unique solution respecting the
criteria of \textit{local accuracy}, \textit{missingness}, and \textit{consistency} (see \cite{lundberg2017unified}). 
Individual Shaply values can be averaged across the data to produce a \textit{SHAP variable importance}, therefore, for the $k$th feature, the SHAP variable importance is defined by,
\begin{equation*}
SHAPVI_{k} = \frac{1}{N}\sum_{i=1}^{N}|\phi_k^{(i)}|. 
\end{equation*}

In the following, we apply SHAP to both the RL and deep-MVH agents. In both cases, we treat the agent's policy as a 
function from the state space to action space $\pi_{\theta}:\mathcal{S}\rightarrow\mathcal{A}$. 
In that sense, we are treating the policy similar to a model in the supervised learning setting. 
We compare the trained agents in two different environments, 
zero transaction cost and non-zero transaction cost ($\alpha = 0.1$), 
with the results shown in Figure \ref{SHAP_RL}.

\begin{figure}[h]
\centering
\fbox{
\hfill
\subfigure[Zero Transaction Cost ($\alpha=0$)]{\includegraphics[height = 3.5 cm, width=6.8cm]{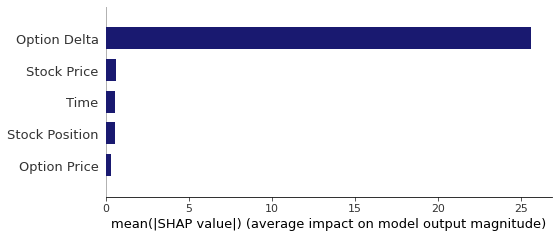}}
\hfill
\subfigure[Large Transaction Cost ($\alpha=0.1$)]{\includegraphics[height = 3.5 cm,width=6.8cm]{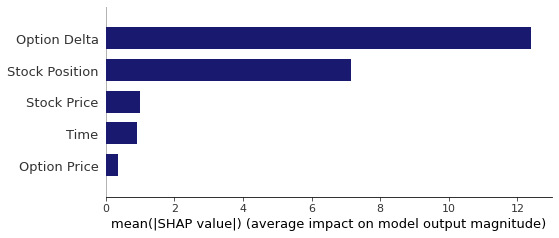}}
\hfill
}
\caption{SHAP variable importance for the RL agent in different transaction cost regimes.}\label{SHAP_RL}
\end{figure}

It is observed that in the zero-transaction cost regime, the value of the baseline (option Delta) has the highest SHAP variable importance. 
This seems reasonable, as this can be interpreted as the agent trying to follow the optimal strategy as close as possible - or as identifying
a strategy close to the Delta hedge as approximately optimal. 
In the high transaction cost regime, the variable importance scores are clearly different, 
where now the current stock position is the second most important variable. 
This can be understood as the trained RL agent exhibiting a trading "inertia" by 
identifying a \textit{no-transaction band} surrounding the current stock position 
in order to avoid too frequent transactions. Doing so will decrease the cost 
(see Figure \ref{Delta_vs_RL_daily_changes} and Figure \ref{Delta_vs_RL_1st_order_risk}).

\begin{figure}[h]
\begin{center}
\fbox{\includegraphics[width=\textwidth,height=\textheight,keepaspectratio]{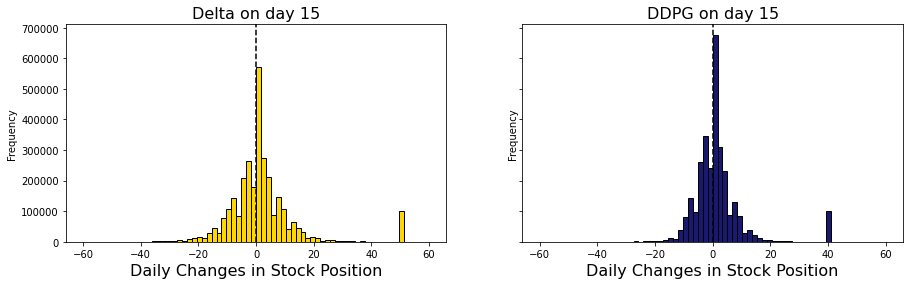}}
\caption{Histogram of portfolio rebalancing decisions made by the Delta hedge (left) and RL-DDPG (right) on day 15, in presence of transaction cost ($\alpha=0.01$). The distribution of daily changes for Delta hedge rebalancing is wider with a standard deviation of 12.4 compared to the standard deviation of 9.4 for RL-DDPG.}\label{Delta_vs_RL_daily_changes}
\end{center}
\end{figure}

\begin{figure}[h]
\begin{center}
\fbox{\includegraphics[width=\textwidth,height=\textheight,keepaspectratio]{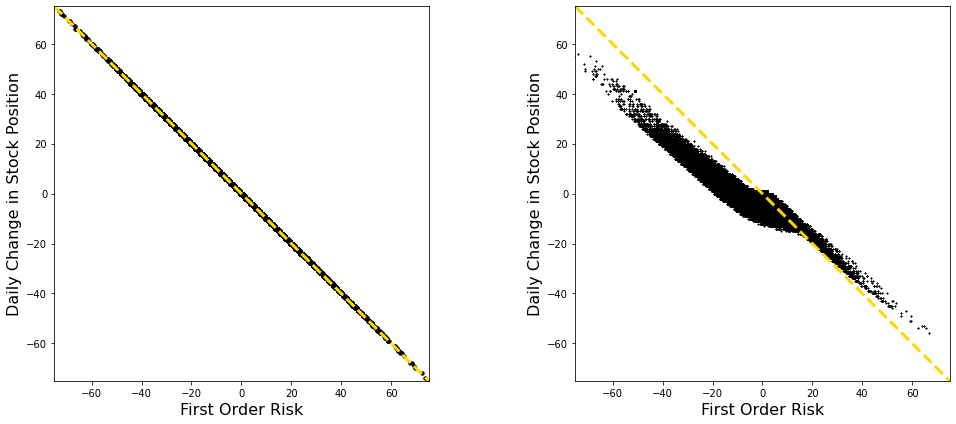}}
\caption{Scatter plot of Delta hedge rebalancing (left) and RL-DDPG (right) decisions, in presence of transaction cost ($\alpha=0.01$), based on the difference between the current stock position and option Delta (times 100). It is seen that the Delta strategy puts on the exact amount of hedge in order to stay Delta neutral. The RL-DDPG seems to have a tolerance for staying under-hedged or over-hedged (hence an inertia for rebalancing to neutralize the risk).}\label{Delta_vs_RL_1st_order_risk}
\end{center}
\end{figure}

\section{Sensitivity Analysis}
Deploying complex models in financial applications requires careful analysis of the model behavior
- and more importantly - the way it depends on various modeling choices made throughout the development process. 
More explicitly, model risk regulations (see \cite{SR11-7} for instance) require 
the model developer and the model user to perform \textit{sensitivity analysis and robustness testing}
and consider their results when deciding whether and under what circumstances the model should be used. 
This is what we will do in this section.

The notions of robustness and sensitivity can have different meanings in various modelling contexts. 
We consider sensitivity analysis as a method for understanding the relationship between input parameters and outputs 
and follow the framework developed in \cite{sedlmair2014visual} (called \textit{Parameter Space Analysis (PSA)}) 
to study methods for solving MDPs:

\textit{ “…the systematic variation of model input parameters,
generating outputs for each combination of parameters, and investigating the relation between parameter settings and corresponding
outputs”.} 

The following sections contain the results of tests designed based on our interpretation of the tasks outlined under PSA framework, 
covering \textit{training, optimization and parameter uncertainty}.

\subsection{Training Convergence}
Deep RL models are known to suffer from instabilities (that is, nonmonotonic and/or overly noisy behavior) 
when training (see \cite{RL_stability} for instance). 
During training, the cumulative reward may be unstable as training episodes progress and result in a non-monotonic curve. 
As explored in \cite{RL_stability}, generation of training data through an exploring agent and 
presence of noise in estimates of policy gradients are considered among the key reasons for this behavior. 
To best study whether such behavior occurs and impact model training and model behavior, 
several training runs under similar but different settings should be run and analyzed.

Figures \ref{DDPG_many_learning_curves} and \ref{deep_MVH_many_learning_curves} show the results 
when we repeated the training procedure with various random seeds. 
The learning curves in \ref{deep_MVH_many_learning_curves}  show clear monotonicity and convergence.
While Figure \ref{DDPG_many_learning_curves} is noisy, there is some monotonic behavior and convergence under that noise. 
These learning curves thus also show reasonable monotonicity and convergence.   

\begin{figure}[h]
\centering
\fbox{\includegraphics[width=\textwidth,height=5.0 cm,keepaspectratio]{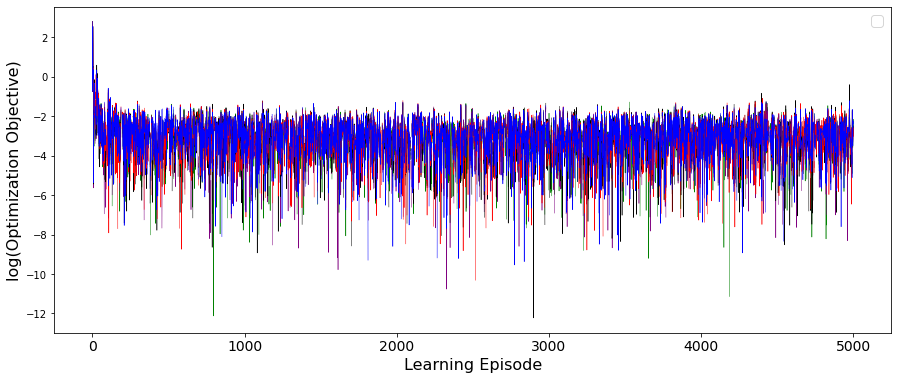}}
\caption{Plot of RL-DDPG learning curves with various random seeds.}\label{DDPG_many_learning_curves}
\end{figure}

\begin{figure}[h]
\centering
\fbox{\includegraphics[width=\textwidth,height=5.0 cm,keepaspectratio]{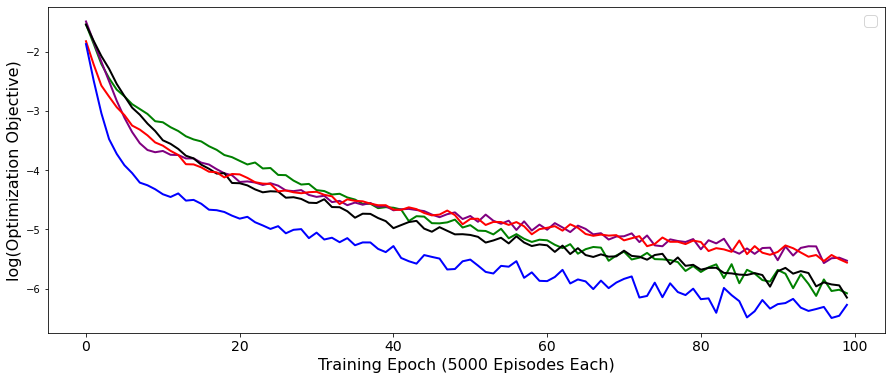}}
\caption{Plot of deep-MVH learning curves with various random seeds. The $y$-axis is the average ($\log$-)training loss for each epoch of training on 5000 episodes.}\label{deep_MVH_many_learning_curves}
\end{figure}

\begin{remark}

When comparing the training behaviors in Figures \ref{DDPG_many_learning_curves} and \ref{deep_MVH_many_learning_curves}, 
one must keep in mind that the training of the Deep MVH model proceeds in several training epochs, meaning, 
gradient updates over average loss over a batch of 5000 episodes for several times 
(also note that in each batch, a fresh set of training data is simulated). 
On the other hand, the training curves in Figure \ref{DDPG_many_learning_curves} show progress for each single episode. 
As a result, the training of the RL-DDPG is on mini-batches of $\text{size} =1$, 
whereas the Deep MVH training is on mini-batches of $\text{size} =5000$. 
This explains the noisiness observed in the RL-DDPG learning curves compared to the Deep MVH learning curves 
(see also Figure \ref{deep_MVH_many_learning_curves_mini_batch1} below).
\end{remark}

\begin{figure}[h]
\centering
\fbox{\includegraphics[width=\textwidth,height=5.0 cm,keepaspectratio]{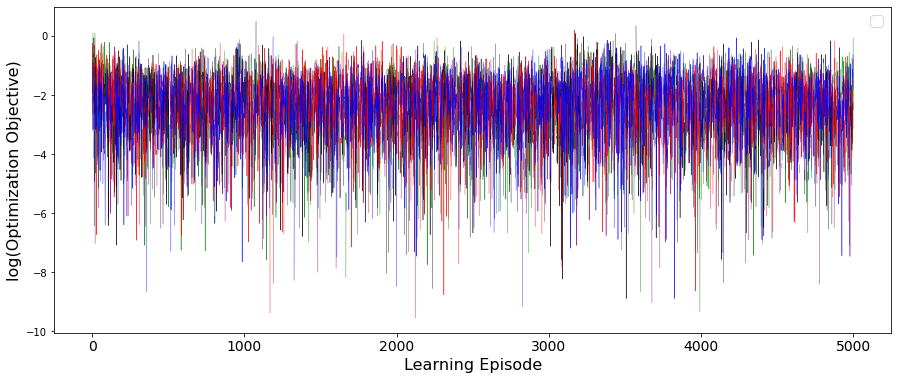}}
\caption{Plot of deep-MVH learning curves with various random seeds and mini-batch size $1$. The $y$-axis is the ($\log$-) training loss for each gradient update on a single training episode.}\label{deep_MVH_many_learning_curves_mini_batch1}
\end{figure}

\subsection{Sensitivity to Hyper-parameters}

RL agents are known to be sensitive to the architectural choices for the neural networks and the training hyper-parameters. % sounds nicer  
Therefore, understanding the effect of various design choices on performance of the RL agents is 
an important area of research (see \cite{RL_HPO1}, \cite{RL_HPO2} for instance). 
In this section, we assess the  training and performance of both RL-DDPG and deep-MVH agents 
under various choices of hyper-parameters.

\subsubsection{Discount Factor}
As noted in \cite{RL_HPO3} for a series of benchmark tasks, the choice of reward discount factor impacts 
the performance of RL agents strongly. 
We tested the sensitivity to this parameter by varying it through the range of $[0.95, 0.97, 0.99, 0.999]$. 
For this test, we kept the transaction cost at zero and all other parameters fixed.

Figures \ref{DDPG_PnL_various_gamma} and \ref{deep_MVH_PnL_various_gamma} below 
show the total hedge quality of the RL-DDPG and deep-MVH agents under various choices of $\gamma$. 
While it seems that the changes in the deep-MVH performance are within the range of inherent variability 
and all follow a similar shape, some noticeable changes are observed in the RL-DDPG agent's total hedge quality
and the shape of the distribution changes the closer the discount factor is to one.

\begin{figure}[h]
\centering
\fbox{\includegraphics[width=10cm,height=8cm]{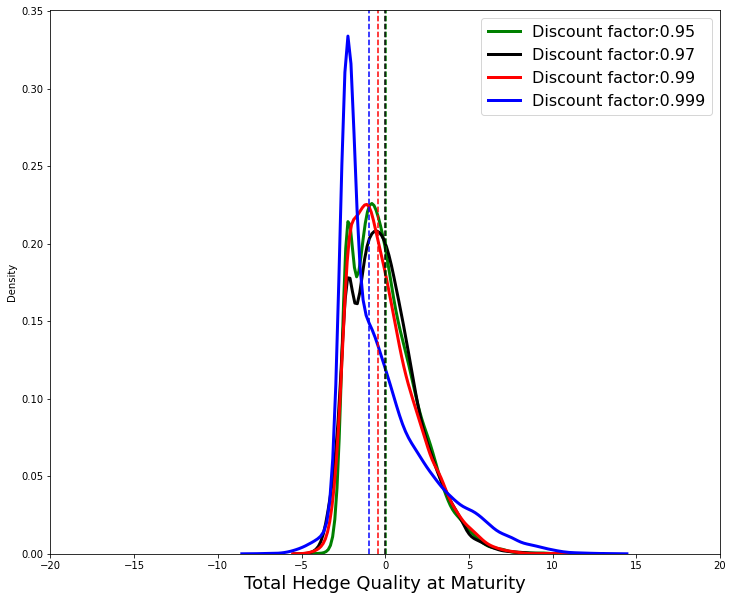}}
\caption{The histogram of total hedge quality for RL-DDPG under various reward discount factors. 
The $x$-axis is the total hedging cost during the life of the option. Positive values denote loss and negative values denote profit.}
\label{DDPG_PnL_various_gamma}
\end{figure}

\begin{figure}[h]
\centering
\fbox{\includegraphics[width=10cm,height=8cm]{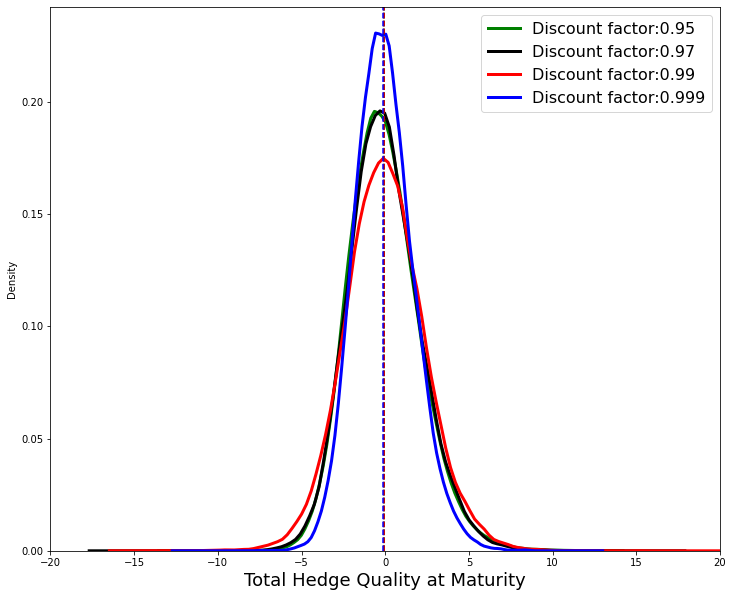}}
\caption{The histogram of total hedge quality for deep-MVH under various reward discount factors. 
The $x$-axis is the total hedging cost during the life of the option. Positive values denote loss and negative values denote profit.}
\label{deep_MVH_PnL_various_gamma}
\end{figure}

\subsubsection{Learning Rates}
In this section, we examine the effects of the learning rates in training the neural networks in deep-MVH and RL-DDPG. 
For the RL-DDPG model, we considered the common range of values for the actor-critic learning 
rates in other domains (see \cite{RL_HPO3}, \cite{RL_HPO1}, \cite{RL_HPO2}). 
More specifically, we kept the learning rate for the actor network smaller than the learning rate for the critic network. 

The hyper-parameter grid of actor-critic learning rate pairs considered is as follows: 
$\{(\expnumber{1}{-6},\expnumber{1}{-5}), (\expnumber{1}{-5},\expnumber{1}{-4}),
(\expnumber{1}{-4},\expnumber{1}{-3}), (\expnumber{1}{-3},\expnumber{1}{-2})\}$. 
The learning curves for each choice of the hyper-parameter are plotted in Figure \ref{DDPG_many_learning_curves_various_lr}. 
The pair $(\expnumber{1}{-5},\expnumber{1}{-4})$ seems to result in smaller (better) optimization objective, 
however, the differences are relatively modest 
given the inherent variability in the training algorithm (also see Figure \ref{DDPG_many_learning_curves}).

\begin{figure}[H]
\centering
\fbox{\includegraphics[width=\textwidth,height=5.0 cm,keepaspectratio]{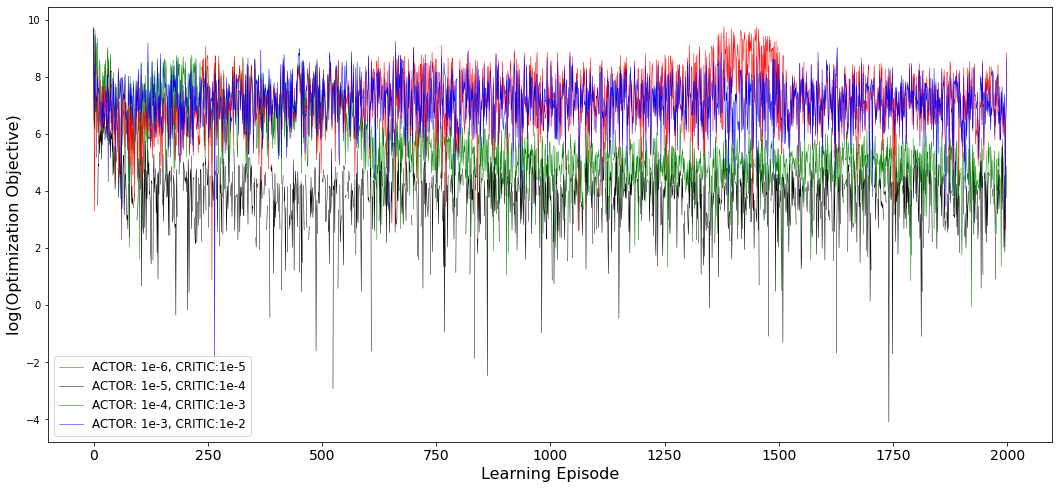}}
\caption{Plot of RL-DDPG learning curves for various actor-critic learning rates.}
\label{DDPG_many_learning_curves_various_lr}
\end{figure}

The plot in Figure \ref{deep_MVH_many_learning_curves_various_lr} below shows the learning curves for the 
deep-MVH training to various choices of learning rate in ADAM optimizer (we turned off the dynamic decreasing of the learning rate 
for this plot). 
The learning rate parameter $\expnumber{1}{-3}$ (as implemented in our default setting) %(implemented in the champion model) 
seems to result in the best performance in the optimization for the parameters tested.

\begin{figure}[H]
\centering
\fbox{\includegraphics[width=\textwidth,height=5.0 cm,keepaspectratio]{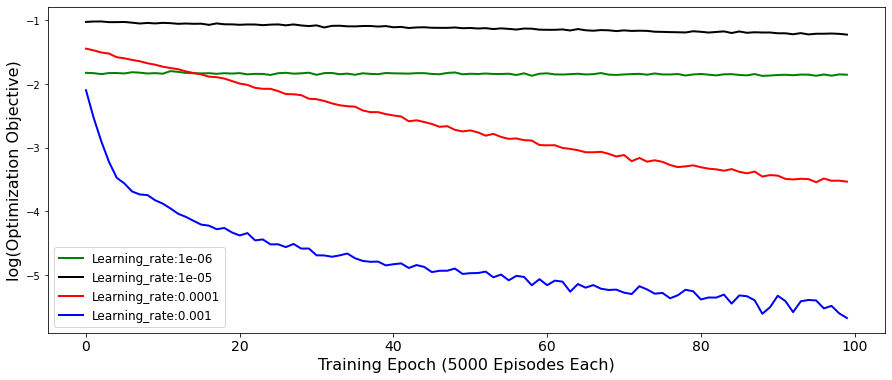}}
\caption{Plot of deep-MVH learning curves for various learning rates.}
\label{deep_MVH_many_learning_curves_various_lr}
\end{figure}

\subsubsection{Neural Network Architecture}
In this section we assess the effect of architectural choices on the training and performance of both deep MVH and RL-DDPG agents.
For the deep-MVH architecture, recall that the depth of the computational graph depends on the time discretization. 
This dependence was assessed by examining the performance for the longer maturities (see Section \ref{deep_MVH_maturity}). 
Here we examine how the width and the depths of each control at each time-step (which translates into total width and 
total depth of the global computational graph) impacts the training and the performance
of the agents. 

For this test, we considered the following feedforward architectures for each time-step $t$ policy function: 
$[5], [5,5], [10,15,10], [100,150,150,100], [100,150,300,150,100]$, where the length of each list denotes the depth of the network 
and each entry denotes the number of neurons in the corresponding layer. 
The plots in Figure \ref{deep_MVH_many_learning_curves_various_depth} and \ref{deep_MVH_PnL_various_depth} show the results. 
It is observed that the deeper architectures achieve smaller (better) optimization objective while exhibiting more noise in training. 
It is also interesting to note that  deeper architectures result in lower level of noise in total hedge P\&L at maturity.

\begin{figure}[H]
\centering
\fbox{\includegraphics[width=\textwidth,height=5.0 cm,keepaspectratio]{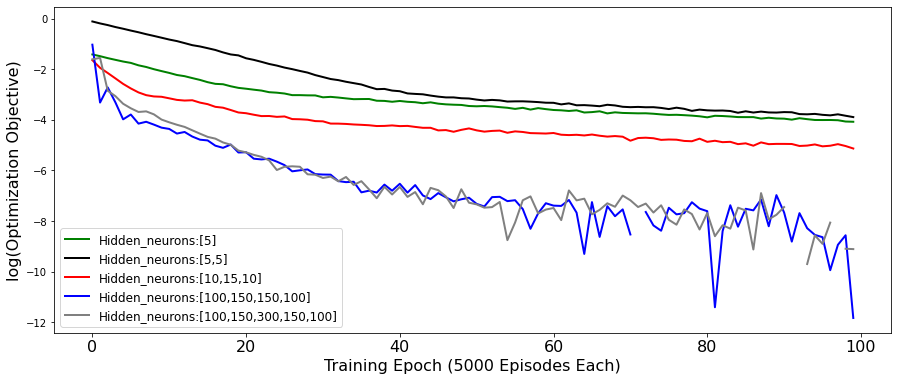}}
\caption{Plot of deep-MVH learning curves for various network depths and widths.}
\label{deep_MVH_many_learning_curves_various_depth}
\end{figure}

\begin{figure}[H]
\centering
\fbox{\includegraphics[width=10cm,height=8cm]{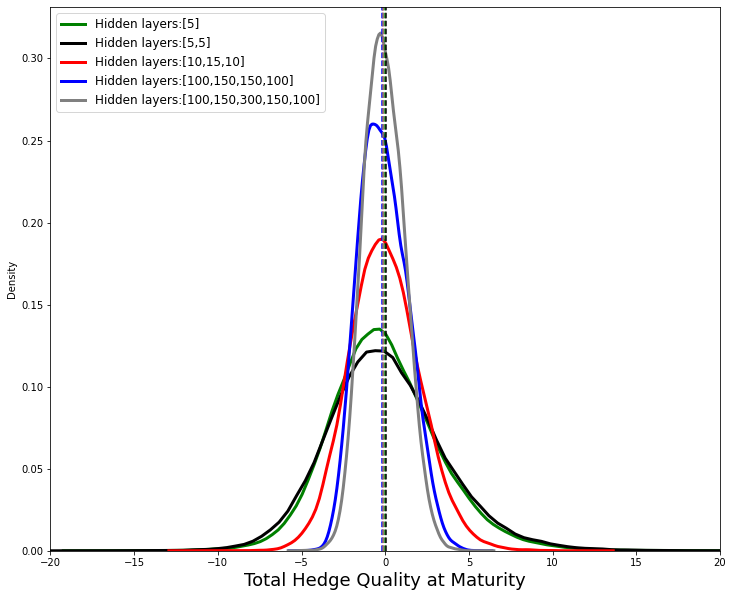}}
\caption{The histogram of total hedge quality for various depths and widths of policy networks within the deep-MVH computation graph. The $x$-axis is the total hedging cost during the life of the option. Positive values denote loss and negative values denote profit.}
\label{deep_MVH_PnL_various_depth}
\end{figure}

We repeat the same test for the actor component of  the RL-DDPG agent while keeping the critic component 
fixed and considering architectures ranging from a linear regression on the state variable to a 
deep feed forward architecture with 5 layers (see \cite{RL_HPO2}). 
The results are presented in Figures \ref{DDPG_many_learning_curves_various_depth} and \ref{DDPG_PnL_various_actor_depth}. 
The learning curves are similar across the architectures. 
Shallower networks seem to generate P\&L histograms with mode further shifted to the negative side
but the shape of the distributions is similar considering the noise in the training process and model outputs.

\begin{figure}[H]
\centering
\fbox{\includegraphics[width=\textwidth,height=8.0 cm,keepaspectratio]{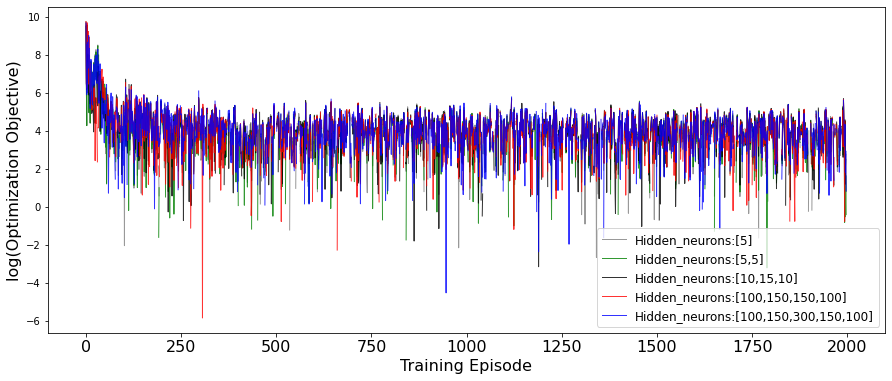}}
\caption{Plot of RL-DDPG learning curves for various depths and widths of actor neural network.}
\label{DDPG_many_learning_curves_various_depth}
\end{figure}

\begin{figure}[H]
\centering
\fbox{\includegraphics[width=10cm,height=8cm]{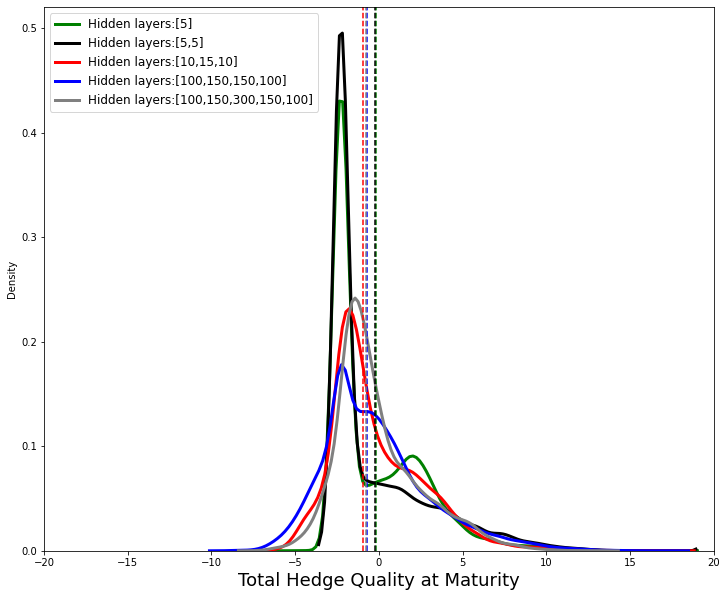}}
\caption{The histogram of total hedge quality for various depths and widths of actor network in RL-DDPG. The $x$-axis is the total hedging cost during the life of the option. Positive values denote loss and negative values denote profit.}
\label{DDPG_PnL_various_actor_depth}
\end{figure}

\subsubsection*{Neural Network Parametrization of Deep-MVH}
As mentioned in Section \ref{model_architecture}, 
we parametrized the stock position rebalancing rate by a neural network first and 
then translated the network output to the hedging decision (number of stocks to hold -- see equation (\ref{rate_to_ratio})). 
To test the sensitivity of the deep-MVH model performance to this choice of parametrization, 
we implemented an alternative architecture where at each time step, 
the position in the hedge instrument is directly parametrized by a neural network. 
Figure \ref{benchmark_rl_deep_MVH_two_parametrizations} shows that the overall performance 
of the deep-MVH model does not show significant impact to this neural network parametrization. 
However, we notice that the histogram of total hedging cost for deep-MVH with direct hedge position parametrization 
has lower dispersion.

\begin{figure}[H]
\centering
\fbox{\includegraphics[width=10cm,height=8cm]{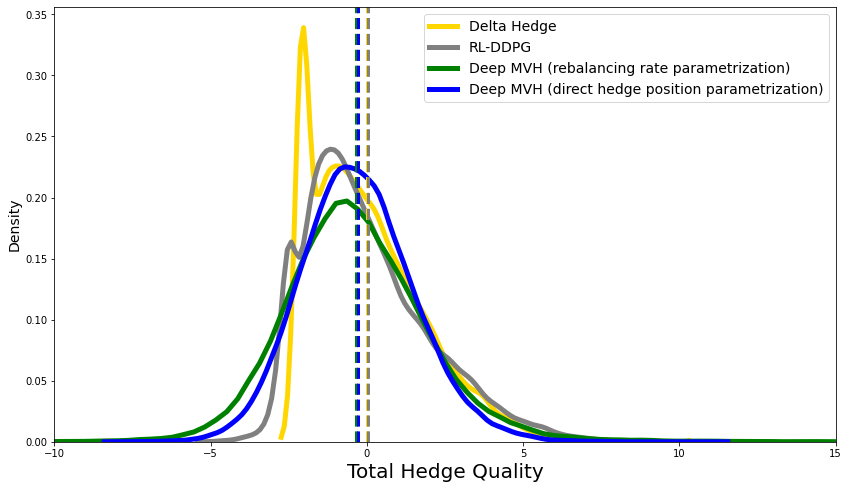}}
\caption{The histogram of total hedge quality for Delta Hedge, RL-DDPG and two different neural network parametrizations of deep-MVH (zero transaction cost).The $x$-axis is the total hedging cost during the life of the option. Positive values denote loss and negative values denote profit.}
\label{benchmark_rl_deep_MVH_two_parametrizations}
\end{figure}

\section*{Conclusions and Future Directions}
We considered reinforcement learning and trajectory-based stochastic optimal control approaches 
to stepwise mean-variance hedging in presence of transaction costs and discrete trading times. 
The main goal of the study was to describe both approaches in a self-contained manner and 
examine the performance of these two approaches on simulated data to be able to arrive at a 
clear assessment of their advantages and limitations in the simplest setting with fewest uncertainties. 
Extensive outcome analysis and sensitivity testing was performed for this purpose. 
It was observed that both RL-DDPG and deep-MVH agents approximate the 
optimal hedging strategy (Delta Hedge) when the transaction cost is zero and they outperformed 
Delta hedge as transaction cost levels increased. 
Both algorithms showed reasonable behavior in tests where the risk aversion parameter 
and market volatility increased. 
So far, the trajectory-based stochastic optimal control approach (deep MVH) does not perform well 
for longer maturities yet, but we leave the design and testing of trajectory-based approaches that perform 
better in this setting to future work. 

There are several directions to pursue in future work. One area worth studying would be 
to look at other vanilla derivatives (puts, Asians, etc.) and at portfolios of such to compare to the 
results on a single call as presented here, including the parametric case (parametric both 
in the type of instrument (strike, maturity, ...)  and in the risk aversion and objective, as done in 
\cite{deep_Bellman} for a deep-hedging type of approach for different hedge objectives). 
In a similar vein, one can look at more exotic instruments with a stronger market data dependency 
(such as Cliquets) or with embedded optionality such as Americans or Bermudans where in addition 
to hedging decisions exercise decisions need to be made, which would require an extension of the setup. 

Another important direction would be related to the models of the hedging instruments and the 
hedged instrument which here were Black-Scholes models and formulas with given, pre-calibrated 
parameters. The proposed approaches and methods should be further tested on models that capture 
more features of the market and correspond to the types of models used to simulate and price 
options in investment banks that engage in market making and competitive pricing and hedging 
of such instruments. It should be studied how one can calibrate such models best to current 
and ongoing market data and how such approaches perform when run on actually observed 
prices and quotes of hedging and hedged instruments. Instead of using calibrated or fitted models,
one could also consider the use of generative data-driven models such as various market models
to create a data-driven end-to-end setup (For an example of generative models with GANs for commodity see 
\cite{boursin2022deep}, for a variety of models for equity using GANs, normalizing flows, and/or neural SDEs
see \cite{wiese2022risk,wiese2021multi,wiese2019deep,Cohen_neural_SDE}).
 Finally, other network architectures for deep-MVH could be
explored as well as specialized training strategies that are adapted to the very deep 
computational graphs that occur in the  trajectory-based stochastic optimal control approach.

We also believe that there is still important remaining work to be done in terms of testing, analysis,
and development techniques.  For instance, a consistent and coherent way to 
analyze and explain reinforcement type and trajectory-based control agents through 
explainability techniques or setups might improve understanding of and control over such
agents and thus ease the way to further adoption of such methods (including testing and validation). 
Similarly, robustness and generalizability of such agents should be studied during 
development and deployment and continued to be monitored after deployment. This can include 
the robustness of agents trained on simulated or model data but applied to observed
real world data which connects to simulation-to-real world (\textit{Sim2Real}) transfer risk 
of RL-based autonomous agents (currently an important area of research -- see \cite{Dietterich}, \cite{RealRL}).
Adapting these methods to development of hedging and similar agents in finance would be 
an interesting direction. 

Finally, adversarial methods to find (potentially 
rare) scenarios with high risk of model failure both in simulated model settings 
but also generative models would bring another tool into the testing toolbox (see \cite{uesato2018rigorous}). 
Along these lines, \cite{cohen2021blackbox} describes many different ways how risks can materialize 
and present themselves (which they call {\em Black box risks})  
for applications of machine learning and deep learning for pricing and hedging,
but does not investigate them deeply. Deeper investigations of those risks 
would be a fruitful area for further work.

As for other formulations of  hedging objectives, we are currently studying data driven approaches 
to quadratic hedging with trajectory-based stochastic optimal control and reinforcement learning
and will discuss in a forthcoming paper. 
Applying FBSDE based methods such as deepBSDE approaches to these problems (quadratic hedging, 
stepwise mean variance hedging, global mean variance hedging) is another ongoing area of work.

\bibliographystyle{alpha}
\bibliography{rlanddsoc}

\end{document}